\documentclass[11pt,a4paper,reqno]{amsart}
\pdfoutput=1
\usepackage{amssymb}
\usepackage[USenglish]{babel}
\usepackage{palatino,eulervm}
\usepackage[T1]{fontenc}
\addtolength{\oddsidemargin}{-5mm}
\addtolength{\evensidemargin}{-5mm}
\addtolength{\textwidth}{1cm}

\linespread{1.2}
\allowdisplaybreaks[4]

\newcommand{\zero}{\phantom{\,0\,}}

\newtheorem{thm}{Theorem}
\newtheorem{lemma}[thm]{Lemma}
\newtheorem{prop}[thm]{Proposition}
\newtheorem{rem}[thm]{Remark}

\newtheorem{df}[thm]{Definition}

\newcommand{\R}{\mathbb{R}}
\newcommand{\C}{\mathbb{C}}
\newcommand{\Q}{\mathbb{H}}
\newcommand{\arxiv}{arXiv:}
\newcommand{\tr}{\mathrm{Tr}}
\newcommand{\Cl}{\mathcal{C}\ell}
\newcommand{\ve}[2]{\genfrac{[}{]}{0pt}{}{\,#1\,}{#2}}

\newcommand{\secondorder}{2nd order condition}
\newcommand{\morita}{Hodge property}

\begin{document}

\title[Fundamental fermions as noncommutative differential forms]{\vspace*{-1.2cm}The Standard Model in noncommutative geometry:\\[5pt] fundamental fermions as internal forms}

\author[L.~D{\k a}browski]{Ludwik D{\k a}browski}
\address[L.~D{\k a}browski]{Scuola Internazionale Superiore di Studi Avanzati (SISSA), via Bonomea 265, I-34136 Trieste}
\email{dabrow@sissa.it}

\author[F.~D'Andrea]{Francesco D'Andrea} 
\address[F.~D'Andrea]{Universit\`a di Napoli ``Federico II'' and I.N.F.N. Sezione di Napoli, Complesso MSA, Via Cintia, 80126 Napoli, Italy.}
\email{francesco.dandrea@unina.it}

\author[A.~Sitarz]{Andrzej Sitarz}
\address[A.~Sitarz]{Instytut Fizyki Uniwersytetu Jagiello{\'n}skiego, Stanis{\l}awa {\L}ojasiewicza 11, 30-348 Krak{\'o}w, Poland.}
\email{sitarz@if.uj.edu.pl}

\subjclass[2010]{Primary: 58B34; Secondary: 46L87, 81T13.}

\keywords{Spectral triples; Morita equivalence; Standard Model.}

\begin{abstract}
Given the algebra, Hilbert space $H$, grading and real structure of the finite spectral triple of the Standard Model, we classify all possible Dirac operators such that $H$ is a self-Morita equivalence bimodule for the associated Clifford algebra.
\end{abstract}

\maketitle

\vspace*{-3mm}

\section{Introduction}

The noncommutative geometry approach to the Standard Model of elementary particles (reviewed for example in \cite{CM08} or in \cite{vS15}) has several interesting features.
It is purely geometric, and it widens the old idea of Kaluza-Klein theory by employing a discretized (actually finite) noncommutative internal space $F$.
This explains the lack of direct observability of such internal space, without adducing the small compactification radius that leads to a tower of unobserved fields. 
By choosing the appropriate internal noncommutative space, one gets a version of the Standard Model with neutrino mixing, with the correct particle content and coupled with (classical) gravity. 
The dynamics is ruled by the spectral action principle and also the quantized model has been worked out, cf. \cite{CM08}.

The mathematical framework is that of \emph{almost commutative} spectral triples, that means products of the canonical spectral triple on a closed Riemannian spin manifold $M$
and a finite-dimensional spectral triple.
For the Standard Model, the latter is built on the real $C^*$-algebra $\C \oplus \mathbb{H} \oplus M_3(\C)$, represented on a $96$-dimensional Hilbert space encoding the
internal degrees of freedom of the multiplet of fermions.
While the Dirac operator on $M$ is canonical, the axioms of a real spectral triple leave a lot of freedom in the choice of Dirac operator for the non-commutative internal space, and not all such operators are physically admissible. For example, fluctuations of some of these Dirac operators may give rise to bosons carrying an interaction between leptons and quarks (the so-called \emph{leptoquarks}), which are not observed experimentally and have some unpleasant features, for example they break the $SU(3)$ symmetry of the Standard Model \cite{PSS99} (in earlier models this was avoided by %artificially 
imposing commutation of the Dirac operator with an auxiliary grading, called $S^0$-real structure \cite[Page 6206]{Con95}). In most recent models \cite{CM08}, the leptoquark terms were removed simply by putting them to zero by hand.
It would be clearly desirable to find some geometric condition that reduces the arbitrariness in the choice of Dirac operator 
(see \cite{BF16, BBB15} for first steps in this direction and also \cite[\S2.6]{CCM07} for the ``massless photon'' condition).

On an oriented Riemannian manifold $(M,g)$ there are (at least) two natural choices of Dirac-type operator.	One is the Hodge-de Rham operator, acting on a space of differential forms. If $M$ is spin, the other choice is the canonical Dirac operator of the spin structure, acting on the space of ``Dirac spinors''. Both these two choices can be characterized in terms of the algebra $\Cl(M,g)$ of sections of the Clifford bundle. If $M$ is spin$^c$, spinors form a Morita equivalence $C(M)$-$\Cl(M,g)$ bimodule; if in addition $M$ is spin, then the immersion of $\Cl(M,g)$ into the commutant of $C(M)$ is realized by conjugation by an antilinear isometry $J$, the ``real structure''.

Instead, in the example of Hodge-de Rham operator the commutant of $\Cl(M,g)$ contains not only $C(M)$, but a copy of $\Cl(M,g)$ itself: in fact, the module of differential forms is a $\Cl(M,g)$ self-Morita equivalence bimodule.

The Clifford algebra $\Cl(M,g)$ has a natural generalization in the framework of spectral triples (recalled in \S\ref{sec:2}): given a spectral triple $(A,H,D)$, we denote by $\Cl_D(A)$ the $C^*$-subalgebra of $\mathcal B(H)$ generated by $A$ and by $1$-forms $[D,a]$, $a\in A$.
In \cite{DD14}, inspired by the algebraic characterization of Dirac spinors, we investigated the constraints on the finite non-commutative space of the Standard Model coming from the request that $H$ is a Morita equivalence $A$-$\Cl_D(A)$ bimodule.
This condition, however, is not satisfied by the Dirac operator proposed by Chamseddine and Connes \cite{CM08}: one needs to modify the grading and allow extra terms in the Dirac operator, which produce additional bosonic fields (carrying an interaction between neutrinos and electrons and between neutrinos and quarks). Physical consequences of such a modification are under investigation \cite{DKL17}.
For the finite non-commutative space of the Standard Model and the internal Dirac operator proposed by Chamseddine and Connes \cite{CM08} the commutant of $\Cl_D(A)$ contains not only $A$, but a copy of $\Cl_D(A)$: 
there are two commuting representations of $\Cl_D(A)$ on $H$, transformed one into the other by the charge conjugation operator $J$. This is the so-called \secondorder\ of \cite{FB14a}, where the Dirac operators satisfying it were divided into four classes.

However, it is interesting determine whether the commutant of $\Cl_D(A)$ is a copy of $\Cl_D(A)$ itself, like in the Hodge-de Rham case, or is bigger.
In the former case we shall say that the \morita\ holds.

In this paper we investigate this question
and shed further light on the underlying geometric and algebraic structure of the noncommutative geometry approach to the Standard Model of elementary particles.
Therein the number of generations of particles,
three as physically affirmed, is in principle arbitrary, and we focus on the simplest case of one generation.

We prove that, under some non-degeneracy conditions (that are generically satisfied by Dirac operators in the aforementioned four classes), $H$ is indeed a self-Morita equivalence $\Cl_D(A)$-bimodule, exactly like in the case of Hodge-de Rham operator on a manifold. Thus the internal degrees of freedom of elementary fermions are in this sense described by differential forms on a finite non-commutative space. This provides also a geometrical meaning to the \secondorder\
as a pre-\morita.

The paper is organized as follows. In \S\ref{sec:2} we recall some preliminary notions about spectral triples, introduce notations and explain the property we are interested in, that we name \morita.
In \S\ref{sec:3} we collect some general results that are valid for a finite-dimensional spectral triple. In \S\ref{sec:4} we introduce the finite noncommutative space of the Standard Model and, using some general lemmas of \S\ref{sec:3}, reproduce a result of \cite{FB14a} on the \secondorder\ (cf.~Prop.~\ref{prop:5}). In \S\ref{sec:5} we prove our main result, i.e.~that in the Standard Model case almost all Dirac operators satisfy the \morita.
Finally in \S\ref{sec:6} we comment on the case of more than one generation of particles.

We adopt the following notations: if $a\in\C$, we denote by $\overline{a}$ the complex conjugated; if $a\in M_n(\C)$ is a matrix, we denote by $a^*$ the Hermitian conjugated and by $\overline{a}=(a^*)^t$ the entry-wise complex conjugated.

\section{Preliminaries}\label{sec:2}

\begin{df}
A unital \emph{spectral triple} $(A,H,D)$ is the datum of:
\begin{itemize}\itemsep=0pt
\item[(i)] a (real or complex) unital $*$-algebra $A$ of bounded operators on a (separable) complex Hilbert space $H$,
\item[(ii)] a selfadjoint operator $D$ on $H$ with compact resolvent,
\end{itemize}
such that $a\cdot\mathrm{Dom}(D)\subseteq\mathrm{Dom}(D)$ and $[D,a]$ extends to a bounded operator on $H$, for all $a\in A$.
The spectral triple is:
\begin{description}
\item[even] if there exists a grading operator $\gamma$ on $H$, i.e.~$\gamma=\gamma^*$ and $\gamma^2=1$, that commutes with $A$ and anticommutes with $D$;
\item[real] if there exists an antilinear isometry $J$ on $H$ s.t.
\begin{equation}\label{eq:KOdim}
J^2=\varepsilon 1
\;,\qquad
JD=\varepsilon' DJ
\quad
\text{and (only in the even case)}\quad
J\gamma=\varepsilon'' \gamma J
\;,
\end{equation}
for some $\varepsilon,\varepsilon',\varepsilon''=\pm 1$, and $\forall\;a,b\in A:$
$$
\begin{array}{c}
[a,JbJ^{-1}]=0, \\
\text{{\small (reality)}}
\end{array}
\qquad\quad
\begin{array}{c}
[[D,a],JbJ^{-1}]=0 .\\

\text{{\small (1st order)}}
\end{array}
$$
\end{description}
\end{df}

\noindent
The three signs in
\eqref{eq:KOdim} define the KO-dimension of the spectral triple (see e.g.~\cite{CM08}).

A typical example of a spectral triple arises from differential geometry. Let $(M,g)$ be an oriented closed Riemannian manifold, $E\to M$ a Hermitian vector bundle equipped with a unitary Clifford action $c:\Gamma^\infty(M,T^*_{\C}M\otimes E)\to \Gamma^\infty(M,E)$ on smooth sections and a connection $\nabla^E$ compatible with $g$. Then:
$$
A=C^\infty(M) \qquad\quad H=L^2(M,E) \qquad\quad D=c\circ\nabla^E
$$
is a spectral triple. 
Two main examples belonging to this class are:
\begin{list}{$\bullet$}{\itemsep=2pt \leftmargin=2em}
\item the Hodge operator $D=d+d^*$ on $E=\bigwedge^\bullet T^*_{\C}M$;
\item the Dirac operator $D=D\hspace{-8pt}\boldsymbol/\,$ on the spinor bundle $E$ (if $M$ is a spin manifold).
\end{list}

The meaning of the first order condition is that $D$ is a first order differential operator acting on smooth sections of a vector bundle.

Let $\Cl(M,g)$ be the algebra of (continuous) sections of the Clifford bundle of $M$:
as a $C(M)$-module, this is isomorphic to the module of continuous sections of the bundle $\bigwedge^\bullet\hspace{1pt}T^*_{\C}M\to M$, but with product defined by the Clifford multiplication.
In the above class of examples, $H$ carries \emph{commuting} representations of the algebras $C(M)$ and $\Cl(M,g)$.
In fact, in the spin manifold example, $\Gamma(M,E)$ is a Morita equivalence $\Cl(M,g)$-$C(M)$ bimodule, and it is well known that a closed oriented Riemannian manifold $M$ admits a spin$^c$ structure if and only if a Morita equivalence $\Cl(M,g)$-$C(M)$ bimodule exists (see e.g.~\S1 of \cite{Var06}, or the original paper \cite[\S2]{Ply86}).
In the Hodge example, on the other hand, $\Gamma(M,E)$ is a $\Cl(M,g)$ self-Morita equivalence bimodule, that is: the algebra of $\Cl(M,g)$-linear adjointable endomorphisms of $\Gamma(M,E)$ is isomorphic to $\Cl(M,g)$ itself.

\smallskip

In this paper we are interested in finite-dimensional
spectral triples, that is: we assume that $H$ is finite-dimensional. As a consequence, $\mathcal{B}(H)=\mathrm{End}_{\C}(H)$, $A\subseteq\mathrm{End}_{\C}(H)$ is finite-dimensional as well,
%\footnote{Note that we identify $A$ with its representation, hence $A\subseteq\mathrm{End}_{\C}(H)$.} the distinction between $\mathrm{Lip}_D(A)$ and $A$
%, and between continuous and $L^2$-sections of $E$ 
%disappears, 
and the compact resolvent condition for $D$ is automatically satisfied.

\begin{df}\label{def:2}
Let $(A,H,D,J)$ be a finite-dimensional real spectral triple.
We set:
\begin{list}{}{\leftmargin=2em \itemsep=3pt}
\item
$\xi^\circ:=J\xi^* J^{-1}, \quad \forall\;\xi \in \mathrm{End}_{\C}(H)$,
\end{list}
If $B$ is a subset of $\mathrm{End}_{\C}(H)$, we call:
\begin{list}{}{\leftmargin=2em \itemsep=3pt}
\item
$B^\circ :=\{\xi^\circ: \xi\in B \}$;

\item
$B' :=\{\xi\in\mathrm{End}_{\C}(H):[\xi,\eta]=0\;\forall\;\eta\in B\}$ 

(the \emph{commutant} of $B$).
\end{list}
Finally, we define:
\begin{list}{}{\leftmargin=2em \itemsep=3pt}
\item
 $\Omega^1_D(A)$ as the \emph{complex} vector subspace of $\mathrm{End}_{\C}(H)$ spanned by $a[D,b]$, $a,b\in A$;
\item
$\Cl_D(A)$ as the complex $*$-subalgebra of 
$\mathrm{End}_{\C}(H)$ generated by $A$ and $\Omega^1_D(A)$.
\end{list}
\end{df}

As usual, we refer to $\Omega^1_D(A)$ as the $A$-bimodule of differential $1$-forms, and we call $\Cl_D(A)$ the \emph{Clifford algebra} of the spectral triple. 
Throughout this paper we assume that 
$\Omega^1_D(A)\neq \{0\}$.

Motivated by the above discussion, we are interested in the following two conditions:

\begin{df}\label{def:3}
We say that a real spectral triple $(A,H,D,J)$ satisfies the \emph{\secondorder} if
\begin{itemize}
\item[(i)] \ $\Cl_D(A)$ and $\Cl_D(A)^\circ$ commute, that is: $\Cl_D(A)^\circ\subseteq\Cl_D(A)'$.
\end{itemize}
We say that the \emph{\morita} holds if
\begin{itemize}
\item[(ii)] \ $\Cl_D(A)'=\Cl_D(A)^\circ$.
\end{itemize}
\end{df}

Obviously (ii) implies (i). Such a condition is the statement that elements of $H$ are the analogue of differential forms (and $\Cl_D(A)$ and $\Cl_D(A)^\circ$ act on $H$ by ``left'' and ``right'' Clifford multiplication), in contrast with the Morita condition in \cite{DD14}, which would say that the elements of $H$ are the analogue of Dirac spinors.

\begin{rem}
Condition (i) for a real spectral triple is equivalent to
\begin{equation}\label{eq:2ndorder}
[[D,a],[D,b]^\circ]=0
\qquad\forall\;a,b\in A.
\end{equation}
Indeed, clearly (i) implies $\Omega^1_D(A)^\circ\subseteq\Omega^1_D(A)'$, that is \eqref{eq:2ndorder}. But $A$ and $\Omega^1_D(A)$ generate the Clifford algebra, and $A^\circ\subseteq\Cl_D(A)'$
by the reality and 1st order condition. So, \eqref{eq:2ndorder} implies (i) as well.
\end{rem}

The terminology \emph{\secondorder} was introduced in \cite{FB14a} and refers to the vanishing of
\eqref{eq:2ndorder}. Note however that the terminology is a bit misleading since, while the 1st order condition says that $D$ is the analogue of a 1st order differential operator, the \secondorder\ doesn't refer, of course, to $D$ being of 2nd order.

\begin{rem}
In the case of an even spectral triple, an alternative definition of $\Cl_D(A)$ could be as the algebra generated by $A$, $\Omega^1_D(A)$ and $\gamma$. Note however that, with such a definition, condition \textup{(i)} becomes too strong and is satisfied only in trivial cases: namely, since $\gamma= \pm \gamma^\circ$, the grading would both commute (condition (i)) and anticommute (by definition of even spectral triple) with elements of $\Omega^1_D(A)$, which implies $\Omega^1_D(A)\equiv 0$.
\end{rem}

\section{Finite-dimensional spectral triples}\label{sec:3}
Let $(A,H,D)$ be a finite-dimensional spectral triple, $J$ an antilinear isometry satisfying \eqref{eq:KOdim} and the reality condition, and call $A_{\C}$ the complex $*$-subalgebra of $\mathrm{End}_{\C}(H)$ generated by $A$ (so $A=A_{\C}$ if $A$ is already complex).

From the structure theorem for finite-dimensional $C^*$-algebras:
$$
A_{\C}\simeq\bigoplus_{i=1}^NM_{n_i}(\C) \;.
$$
If $P_i$ is the unit of the summand $M_{n_i}(\C)$,
then $P_1,\ldots,P_N$ are (represented by) orthogonal projections on $H$ whose sum is $1$.
From the reality condition it follows that the operators $Q_i := J P_i J^{-1}$ form a set of orthogonal projections as well, commuting with the projections $P_i$'s, and whose sum is also $1$.

Calling
$$
D_{ij, kl}:=P_iQ_jDP_kQ_l \;,
$$
we can decompose the Dirac operator into four pieces:\footnote{For a general spectral triple, despite the notations, this is not the same decomposition that appears in \cite[\S4]{DD14}, although in the Standard Model case the two decompositions coincide.}
\begin{equation}\label{eq:D0D1DR}
D=D_0+D_1+D_2+D_R \;,
\end{equation}
where
\begin{align*}
D_0 &:= \sum_{i,j,k\,:\,i \neq k} D_{ij,kj} \;, &
D_1 &:= \sum_{i,j,l\,:\,j \neq l} D_{ij, il} \;, \\[2pt]
D_2 &:= \sum_{\substack{i,j,l,k \\[1pt] i\neq k,j \neq l}} D_{ij, kl} \;,&
D_R &:= \sum_{i,j} D_{ij, ij} \;.
\end{align*}
Note that \eqref{eq:KOdim} implies
$JD_{ij, kl}J^{-1}=\varepsilon'\,D_{ji,lk}$ (for all $i,j,k,l$) and then
\begin{equation}\label{eq:JDiJ}
JD_0J^{-1}= \varepsilon'\,D_1 \;,\qquad
JD_2J^{-1} = \varepsilon'\,D_2 \;,\qquad
JD_RJ^{-1} = \varepsilon'\,D_R \;.
\end{equation}
Note also that \mbox{$(D_{ij, kl})^*=D_{kl,ij}$}, which means that all the summands in \eqref{eq:D0D1DR} are selfadjoint operators.

Let us extend some of the results in \cite{PS96} and  add some considerations about the \secondorder\ and \morita.

\begin{lemma}\label{lemma:6}
$D_0+D_2\in\Omega^1_D(A)$.
\end{lemma}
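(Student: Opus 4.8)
The plan is to show that each of the ``off-diagonal-in-$P$'' and ``off-diagonal-in-both'' pieces of $D$ can be written using the projections $P_i$, $Q_j$ and $D$ itself, and then to recognise these as $1$-forms. The key observation is that $[D,P_k]$ is automatically a $1$-form (it equals $P_k[D,P_k]-[D,P_k]P_k$ up to sign, hence lies in the $A$-bimodule generated by the $[D,a]$'s since $P_k\in A_{\C}$), and that conjugating by the commuting projections $Q_j$ preserves membership in $\Omega^1_D(A)$. Indeed, $Q_j = JP_jJ^{-1}$ commutes with $A$ and with every $[D,a]$ by the reality and first-order conditions, so $Q_j$ lies in $\Cl_D(A)'$; multiplying a $1$-form on left and right by such $Q_j$ stays inside $\Omega^1_D(A)$ because $\Omega^1_D(A)$ is an $A$-bimodule and... wait, $Q_j$ is not in $A$. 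So more care is needed: one should instead extract $D_0$ and $D_2$ directly from a commutator.

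First I would compute $[D,P_k] = \sum_{i,j}[D_{..},P_k]$ using $D=\sum D_{ij,kl}$ and $P_kP_i=\delta_{ik}P_k$. One finds
\begin{equation*}
[D,P_k] = \sum_{i,j,l\,:\,i\neq k} D_{ij,kl} \;-\; \sum_{i,j,l\,:\,l\neq k} D_{kj,il}\,,
\end{equation*}
which mixes terms from $D_0$, $D_2$, and (via the second sum with $i=k$ on the first index but $l\neq k$ — these are actually in $D_1$ after relabelling)... so a single commutator is not enough to isolate $D_0+D_2$ cleanly. The natural fix is to also sandwich with $P_k$: note $P_k[D,P_k](1-P_k) = P_k D (1-P_k)$ and $(1-P_k)[D,P_k]P_k = -(1-P_k)DP_k$, so summing over $k$ the pieces $P_kD(1-P_k)$ gives exactly $\sum_{i\neq k}P_i D P_k = D_0+D_2$ (the diagonal-in-$P$ part, namely $D_1+D_R$, drops out). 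Explicitly,
\begin{equation*}
D_0+D_2 = \sum_k P_k\,D\,(1-P_k) \;=\; \sum_k P_k[D,P_k](1-P_k)\,,
\end{equation*}
and each summand $P_k[D,P_k](1-P_k)$ is manifestly in $\Omega^1_D(A)$ since $[D,P_k]\in\Omega^1_D(A)$ (as $P_k = P_k^2$, we have $[D,P_k] = [D,P_k]P_k + P_k[D,P_k] \in \Omega^1_D(A)$ because $\Omega^1_D(A)$ is an $A$-bimodule) and $\Omega^1_D(A)$ is stable under left/right multiplication by $P_k,1-P_k\in A_{\C}$.

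The only subtlety I anticipate is the bookkeeping: making sure that $\sum_k P_k D(1-P_k)$ really collapses to $D_0+D_2$ and nothing else. This amounts to checking that $P_iDP_k$ with $i\neq k$ appears exactly once (from the $k$-th summand) and $P_iDP_i$ never appears, which is immediate from $P_i(1-P_k) = P_i$ for $i\neq k$ and $P_k(1-P_k)=0$; then decomposing further with the $Q_j$'s via $\sum_j Q_j = 1$ identifies $\sum_{i\neq k}P_iDP_k$ with $\sum_{i\neq k}\sum_{j,l}D_{ij,kl} = D_0 + D_2$. A final remark: one must also confirm $\Omega^1_D(A)$ is closed under multiplication by $A_{\C}$ rather than just $A$ — but this is automatic since $\Omega^1_D(A)$ was defined as the complex span of $a[D,b]$ and taking complex linear combinations and products within $A_{\C}$ keeps us inside. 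Hence $D_0+D_2\in\Omega^1_D(A)$, completing the proof.
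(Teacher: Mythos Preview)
Your approach is essentially the paper's: both write $D_0+D_2=\sum_{i\neq k}P_iDP_k$ and then recognise each summand as an element of $\Omega^1_D(A)$ via the projections $P_k\in A_{\C}$. The paper does this in one line using the identity $P_iDP_k=P_i[D,P_k]$ for $i\neq k$ (from $P_iP_k=0$), which is simpler than your detour through $P_kD(1-P_k)$ --- and note that your displayed identity has a sign slip: with $[D,P_k]=DP_k-P_kD$ one gets $P_k[D,P_k](1-P_k)=-P_kD(1-P_k)$, not $+P_kD(1-P_k)$, though of course this does not affect membership in $\Omega^1_D(A)$. Your closing remark that $\Omega^1_D(A)$ is an $A_{\C}$-bimodule (needed since $P_k\in A_{\C}$ rather than $A$) is a point the paper leaves implicit.
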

\begin{proof}
An explicit computation gives
$D_0+D_2=\sum_{i\neq k}P_iDP_k=\sum_{i\neq k}P_i[D,P_k]$, where we used the fact that $P_iP_k=0$ for $i\neq k$.
\end{proof}

\begin{prop}[1st order]\label{prop:6}
$D$ satisfies the 1st order condition if and only if: $\,D_2=0$,
$D_1\in A'$, and $D_R$ satisfies the 1st order condition.
\end{prop}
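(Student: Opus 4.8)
The plan is to work with the decomposition $D=D_0+D_1+D_2+D_R$ and translate the 1st order condition $[[D,a],Q_mbQ_m^{-1}]=0$ — equivalently $[[D,a],JbJ^{-1}]=0$ for all $a,b\in A$ — into conditions on the four pieces by sandwiching between the mutually commuting projections $P_i$ and $Q_j$. Since $A$ commutes with each $P_i$ and (by reality) with each $Q_j$, and since $b\in A$ acts inside each $M_{n_i}(\C)$-summand, it suffices to test the condition on elements $a=a_iP_i$ and $b=b_kP_k$ living in single summands, and to exploit that $JbJ^{-1}=\sum_k \overline{b_k}\,Q_k$ acts in the ``right'' summands. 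First I would observe, using $[D,P_k]=DP_k-P_kD$ together with Lemma~\ref{lemma:6}, that $[D,a]$ for $a\in A_i:=P_iA$ decomposes along the same four types, and compute $P_iQ_j[D,a]P_kQ_l$ explicitly for each.

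The key computational step is then to impose $[[D,a],JbJ^{-1}]=0$ and read off the constraints componentwise. For the $D_2$ part: a term $D_{ij,kl}$ with $i\neq k$ and $j\neq l$ contributes to $[D,a]$ an operator mapping the $(k,l)$-block to the $(i,j)$-block; commuting with $JbJ^{-1}=\sum_m\overline{b_m}Q_m$ picks up $\overline{b_j}-\overline{b_l}$ on that term, which is generically nonzero when $j\neq l$, forcing $D_2=0$. (One must be slightly careful: what actually appears is $[D_2,a]$, but since $a$ only acts on the left/$P$-indices, the $Q$-index structure of $D_2$ survives, and the same index $i\neq k$ appearing in $[D,a]$ guarantees $[D_2,a]\neq0$ for suitable $a$ unless $D_2=0$.) For the $D_0$ part: $D_{ij,kj}$ with $i\neq k$ has $JbJ^{-1}$ acting by the same $\overline{b_j}$ on both sides, so $[[D_0,a],JbJ^{-1}]=0$ automatically — $D_0$ is unconstrained by 1st order (beyond being a 1-form, which Lemma~\ref{lemma:6} already gives). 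For the $D_1$ part: $D_{ij,il}$ is diagonal in the $P$-index, so $[D_1,a]=0$ for all $a\in A$, i.e.\ the condition $[[D_1,a],\cdot]=0$ holds iff $[D_1,a]=0$ for all $a$, which is exactly $D_1\in A'$ — here I would use finite-dimensionality and that $A$ acts block-diagonally to argue $[D_1,a]=0\;\forall a$ forces $D_1\in(A_{\C})'$, equivalently $D_1\in A'$. For $D_R$: $D_{ij,ij}$ is diagonal in \emph{both} indices, so $JbJ^{-1}$ acts by $\overline{b_i}$ on both sides of any $[D_R,a]$-term and the cross condition with $b$ is vacuous; what remains is precisely that $D_R$, as a Dirac operator in its own right (on the reduced Hilbert space), satisfies the 1st order condition — which is the tautological residual statement.

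For the converse direction the argument is immediate: if $D_2=0$, $D_1\in A'$, and $D_R$ satisfies the 1st order condition, then $[D,a]=[D_0,a]+[D_R,a]$, and one checks that $[[D_0,a],JbJ^{-1}]=0$ (because $D_0$ is block-diagonal in the $Q$-index, so it commutes with every $Q_j$ and hence with $JbJ^{-1}=\sum_j\overline{b_j}Q_j$, whence also $[D_0,a]$ does, as $a$ touches only $P$-indices), while $[[D_R,a],JbJ^{-1}]=0$ is exactly the hypothesis on $D_R$. Assembling these gives the 1st order condition for $D$.

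The main obstacle I anticipate is not conceptual but bookkeeping: making the ``acts only on $P$-indices'' heuristic rigorous when writing $[D,a]$ — one must verify that $[D_{ij,kl},a]$ preserves the $Q$-block-structure $(j,l)$ of $D_{ij,kl}$, which follows because $a\in A$ commutes with all $Q_j$, but stating it cleanly for each of the four types without drowning in indices requires care. A secondary subtlety is the clause ``$D_R$ satisfies the 1st order condition'': one should be explicit that $D_R=\sum_i P_iQ_iDP_iQ_i\cdots$ — actually $D_R=\sum_{i,j}P_iQ_jDP_iQ_j$ — decomposes over the commuting $P_iQ_j$, $a$ is block-diagonal for the $P_i$, $JbJ^{-1}$ for the $Q_j$, and the 1st order condition for $D_R$ means $[[D_R,a],JbJ^{-1}]=0$ in this reduced sense; there is a small risk of circularity in how ``1st order condition for $D_R$'' is parsed, which I would forestall by noting that $D_R$ is genuinely the obstruction term that cannot be further reduced by these projections alone.
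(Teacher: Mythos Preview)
Your approach is the same as the paper's---decompose $D$ via the projections $P_i,Q_j$ and read off the constraints block by block---but there are two places where the argument, as you have written it, would not go through.

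\medskip

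\textbf{The converse direction has a real gap.} You claim that $[[D_0,a],JbJ^{-1}]=0$ because ``$D_0$ is block-diagonal in the $Q$-index, so it commutes with every $Q_j$ and hence with $JbJ^{-1}=\sum_j\overline{b_j}Q_j$''. This last equality is false for general $b\in A$: the operator $JbJ^{-1}$ lies in $A^\circ$, and $A^\circ$ restricted to $Q_jH$ is a full matrix algebra $M_{n_j}(\C)$, not just scalars times $Q_j$. Commuting with the $Q_j$'s (which $D_0$ does) is much weaker than commuting with all of $A^\circ$ (which $D_0$ need not do, a priori). What rescues the step is precisely the hypothesis $D_1\in A'$ that you have not used: conjugating by $J$ turns $D_1\in A'$ into $D_0\in (A^\circ)'$, and \emph{then} $[D_0,JbJ^{-1}]=0$ for all $b$. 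Equivalently (and this is how the paper does it), one writes
\[
[[D_0,a],JbJ^{-1}]=\varepsilon'\,J\,[[D_1,JaJ^{-1}],b]\,J^{-1},
\]
and observes that the inner bracket vanishes because both $D_1$ and $JaJ^{-1}$ commute with $b$ (the former by hypothesis, the latter by reality). Without invoking $D_1\in A'$ your converse argument does not close.

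\medskip

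\textbf{The forward argument for $D_1\in A'$ is garbled.} You write ``$D_{ij,il}$ is diagonal in the $P$-index, so $[D_1,a]=0$''---but being $P$-block-diagonal does not force commutation with $A$, since $A$ acts as a full matrix algebra inside each $P_i$-block. What you actually need is the \emph{off}-diagonal structure in the $Q$-index: since $j\neq l$, sandwiching the 1st order condition $[[D,a],Q_j]=0$ (take $b=P_j$) between $P_iQ_j$ and $P_iQ_l$ isolates $[D_{ij,il},a]$. The paper packages this as the identity
\[
[D_{ij,il},a]=P_i\,[Q_j,[D,a]]\,P_iQ_l,
\]
which vanishes directly by the 1st order hypothesis. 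Your heuristic about the factor $\overline{b_j}-\overline{b_l}$ is the right intuition (for central $b$), but the sentence as written asserts the wrong implication.
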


\begin{proof}
For $i\neq k$ and $j\neq l$, since $P_iP_k=Q_jQ_l=0$, if $D$ satisfies the 1st order condition, one has:
$$
D_{ij,kl}=P_i[Q_j,[D,P_k]]Q_l=0 \;,
$$
that means $D_2=0$. Similarly for $i=k$ and $j\neq l$,
since $P_i$ is central in $A_{\C}$ we get
$$
[D_{ij,il},a]=P_i[Q_j,[D,a]]P_iQ_l=0 \qquad\forall\;a\in A,
$$
that means $D_1\in A'$.

Note that if $D_2=0$ and $D_1\in A'$, for all $a\in A$ and $b^\circ\in A^\circ$ one has
$$
[[D,a],JbJ^{-1}]=[[D_0,a],JbJ^{-1}]+[[D_R,a],JbJ^{-1}]
$$
But
$$
[[D_0,a],JbJ^{-1}]=\epsilon' J[[D_1,JaJ^{-1}],b]J^{-1}
$$
is zero since both $D_1$ and $JaJ^{-1}$ commute with $b$ (cf.~reality condition). Hence $[[D,a],JbJ^{-1}]=[[D_R,a],JbJ^{-1}]$ and $D$ satisfies the 1st order condition if and only if $D_R$ does.
\end{proof}

\noindent
Note that:
\begin{itemize}\itemsep=2pt
\item $A$ and $A_{\C}$ have the same commutant in $\mathrm{End}_{\C}(H)$;
\item due to \eqref{eq:JDiJ}, $D_1\in A'\Leftrightarrow D_0\in (A^\circ)'$ and $D_R\in A' \Leftrightarrow D_R\in (A^\circ)'$.
\end{itemize}

\begin{rem}\label{rem:8}
If $(A,H,D,J)$ is a finite-dimensional real spectral triple, from Prop.~\ref{prop:6} and Lemma~\ref{lemma:6} we deduce that $D_0\in\Omega_D^1(A)\subseteq\Cl_D(A)$. If in addition $D_R\in A'$, then $[D,a]=[D_0,a]\;\forall\;a\in A$ and the Clifford algebra is generated by $A_{\C}$ and $D_0$.
\end{rem}

\begin{prop}[2nd order]\label{prop:9}
Let $(A,H,D,J)$ be a finite-dimensional real spectral triple with $D_R\in A'$. The \secondorder\ is satisfied if and only if $[D_0,D_1]=0$.
\end{prop}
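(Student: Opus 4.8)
The plan is to reduce the 2nd order condition to the single commutator $[D_0,D_1]=0$ by exploiting the hypothesis $D_R\in A'$, which, by Remark~\ref{rem:8}, tells us that $[D,a]=[D_0,a]$ for all $a\in A$, so that $\Omega^1_D(A)$ is spanned by $A_{\C}$ together with $D_0$, and $\Cl_D(A)$ is generated by $A_{\C}$ and $D_0$. First I would invoke the Remark following Definition~\ref{def:3}: the 2nd order condition (i) is equivalent to \eqref{eq:2ndorder}, i.e.\ $[[D,a],[D,b]^\circ]=0$ for all $a,b\in A$. Since $[D,a]=[D_0,a]$ and, by \eqref{eq:JDiJ}, $[D_0,a]^\circ=J[D_0,a]^*J^{-1}=\varepsilon' J[D_0,a^*]J^{-1}$ is built from $D_1$ and $JaJ^{-1}$, the whole condition becomes a statement purely about $D_0$, $D_1$ and the algebra $A_{\C}\cup A_{\C}^\circ$.

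The key computation is then to write $[[D_0,a],[D_0,b]^\circ]$ explicitly. Using that $[D_0,a]=D_0a-aD_0$ and similarly for the $\circ$-term (which involves $D_1$ and elements of $A^\circ$, all of which commute with $A$ by the reality condition, and with $D_1$ once we know $D_1\in A'$ — this is where Prop.~\ref{prop:6} and the first-order data implicit in being a spectral triple enter, although strictly we only need $D_R\in A'$ as an explicit hypothesis and should check whether $D_1\in A'$ must be assumed or follows), one expands the double commutator and collects terms. The expansion should produce, up to conjugation by $J$ and factors of $\varepsilon'$, an expression of the form $(\text{element of }A_{\C})\cdot[D_0,D_1]\cdot(\text{element of }A_{\C}^\circ)$ or a sum of such, so that vanishing of $[D_0,D_1]$ gives (i) immediately. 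The nontrivial direction is the converse: assuming (i), one must conclude $[D_0,D_1]=0$, and for this I would specialize $a$ and $b$ to the projections $P_i$ (or to $P_i x P_j$-type matrix units) and sum over indices, using $\sum_i P_i=1$ and the block structure $D_0=\sum_{i\neq k}P_iD_0P_k$, $D_1=\sum_{j\neq l}Q_jD_1Q_l$ to recover the full operator $[D_0,D_1]$ from the family of double commutators.

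The main obstacle I anticipate is the bookkeeping in the converse direction: $[D_0,D_1]$ need not lie in $\Omega^1_D(A)$ or be expressible as a single $[[D,a],[D,b]^\circ]$, so one must argue that the \emph{span} of the double commutators, as $a,b$ range over $A$, is large enough to detect $[D_0,D_1]\neq 0$. Concretely, I expect to use that $D_0$ maps the range of $P_k$ into the range of $P_i$ for $i\neq k$ while $D_1$ maps the range of $Q_l$ into the range of $Q_j$ for $j\neq l$, and that $A_{\C}$ acts by full matrix algebras $M_{n_i}(\C)$ on each $P_i$-block; by choosing $a$ supported in a single summand $M_{n_i}(\C)$ one isolates the corresponding block of $[D_0,D_1]$, and since the $P_i$ exhaust $1$ and the blocks are independent, vanishing of all of them forces $[D_0,D_1]=0$. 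Throughout, one has to keep careful track of the conjugations by $J$ via \eqref{eq:JDiJ} and the fact, noted after Prop.~\ref{prop:6}, that $D_1\in A'\Leftrightarrow D_0\in(A^\circ)'$, which is what makes the $\circ$-side of the bracket manageable. The computation is essentially linear algebra once the right substitutions are made, so I do not expect a genuinely deep difficulty, only the need to organize the index sums cleanly.
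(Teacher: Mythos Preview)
Your proposal is correct in spirit but takes a more laborious route than the paper, and you overestimate the difficulty of the converse.

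The paper argues directly at the level of \emph{generators of algebras} rather than via the equivalent formulation \eqref{eq:2ndorder}. By Remark~\ref{rem:8} (using $D_R\in A'$), $\Cl_D(A)$ is generated as a $*$-algebra by $A_{\C}$ and the single element $D_0$; hence $\Cl_D(A)^\circ$ is generated by $A_{\C}^\circ$ and $D_1=\pm D_0^\circ$. Two $*$-algebras commute if and only if their generating sets commute pairwise. Of the four generator pairs, three already commute: $A_{\C}$ with $A_{\C}^\circ$ by reality, $D_0$ with $A_{\C}^\circ$ and $D_1$ with $A_{\C}$ by Prop.~\ref{prop:6} (the first-order condition, which holds because $(A,H,D,J)$ is a real spectral triple, gives $D_1\in A'$ and hence $D_0\in(A^\circ)'$). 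So the \secondorder\ reduces to the single remaining pair, namely $[D_0,D_1]=0$. Both directions drop out simultaneously.

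Your expansion of $[[D_0,a],[D_0,b]^\circ]$ would also work for the forward direction, and your uncertainty about whether $D_1\in A'$ must be assumed is resolved by Prop.~\ref{prop:6}: it follows from the first-order axiom. But your ``main obstacle'' in the converse direction is illusory. You do not need to specialize to projections or sum over blocks to detect $[D_0,D_1]$: since $D_0\in\Omega^1_D(A)\subseteq\Cl_D(A)$ (Remark~\ref{rem:8}) and $D_1\in\Cl_D(A)^\circ$, the \secondorder\ \emph{directly} says $[D_0,D_1]=0$. The bookkeeping you anticipate is entirely avoidable once you notice that $D_0$ is itself an element of the Clifford algebra, not merely something whose commutators with $A$ lie there.
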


\begin{proof}
From Rem.~\ref{rem:8}, $\Cl_D(A)$ is generated by $A_{\C}$ and $D_0$, and $\Cl_D(A)^\circ$ by $A_{\C}^\circ$ and $D_1$. Now, $D_0$ commutes with $A_{\C}^\circ$, $D_1$ with $A_{\C}$ (Prop.~\ref{prop:6}) and $A_{\C}$ with $A_{\C}^\circ$ (reality condition).
The algebras $\Cl_D(A)$ and $\Cl_D(A)^\circ$ are then mutually commuting if and only if $D_0$ commutes with $D_1$.
\end{proof}

\begin{rem}
It is easy to produce examples in which $D_R$ does not satisfy the 1st order condition, or examples of spectral triples satisfying the 1st order condition and the \secondorder, but with $D_R\notin A'$ (proving that such a condition is sufficient but not necessary).
Take $A=H=M_n(\C)$, with representation given by left multiplication, and \mbox{$J(a)=a^*$} the Hermitian conjugation. Then:
\begingroup
\setlength{\leftmargini}{1.9em} %itemize/quote left margin
\begin{itemize}\itemsep=2pt
\item[(a)] if $D=dd^\circ$ with $d=d^*\in A$, in the notations above one has $D_R=D_{11,11}=D$, and the 1st order condition is not satisfied unless $d$ is central;

\item[(b)] if $D=d+d^\circ$ with $d=d^*\in A$, the 1st and 2nd order conditions are satisfied, again $D_R=D_{11,11}=D$, and $D_R\notin A'$ unless $d$ is central.
\end{itemize}
\endgroup
\end{rem}

\begin{rem}\label{rem:11}
Note that $D_R$ maps each vector subspace $H_{ij}=P_iQ_jH$ to itself, as well as $\gamma$ in the even case. If on a subspace $H_{ij}$ the grading $\gamma$ is proportional to the identity, this forces $D_R$ to be zero on $H_{ij}$.

If the spectral triple is even and orientable \cite[Axiom 4']{{Con96}} then $\gamma$ is proportional to the identity on each subspace $H_{ij}$ (see \cite[Lemma 3]{PS96} or \cite[\S3.3]{Kra97}) and $D_R=0$.
\end{rem}

We pass now to the \morita. Let us start with some considerations about the commutant of a finite-dimensional $C^*$-algebra.

\medskip

Let $B$ be a unital complex $*$-subalgebra of $\mathrm{End}_{\C}(H)$. From the structure theorem for finite-dimensional $C^*$-algebras we know that it is a finite direct sum of matrix algebras:
$B\simeq\bigoplus_{i=1}^sM_{m_i}(\C)$ for some positive integers $s,m_1,\ldots,m_s$. Call~$\widetilde{P}_i$ the unit of the $i$-th summand $M_{m_i}(\C)$,
then $\widetilde{P}_1,\ldots,\widetilde{P}_s$ are orthogonal projections and $H$ decomposes as
$H\simeq\bigoplus_{i=1}^sH_i$, with
\begin{equation}\label{eq:Hi}
\widetilde{H}_i = \widetilde{P}_i\cdot H \simeq \C^{m_i} \otimes \C^{k_i},  \;,
\end{equation}
where $k_i$ is the multiplicity of the (unique) irreducible representation $\C^{m_i}$ of $M_{m_i}(\C)$ 
in $\widetilde{P}_i\cdot H$, and the action of the algebra $M_{m_i}(\C)$ on $ \C^{m_i} \otimes \C^{k_i}$ 
is given by matrix multiplication on the first component of the tensor product.
With this notations, we can now compute the commutant $B'$  of $B$.

\begin{lemma}\label{lemma:11}
$B'\simeq\bigoplus\nolimits_{i=1}^sM_{k_i}(\C)$ and the action of $B'$ on 
$\widetilde{H}_i \simeq \C^{m_i} \otimes \C^{k_i}$ is given by matrix multiplication on the second component of the tensor product by $M_{k_i}(\C)$.
\end{lemma}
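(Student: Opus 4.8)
The plan is to reduce to the single-summand case and then invoke the classical structure theory of representations of a matrix algebra. First I would observe that, since $B\simeq\bigoplus_{i=1}^s M_{m_i}(\C)$ with central units $\widetilde P_i$, and since the $\widetilde P_i$ lie in $B$, any $T\in B'$ must commute with each $\widetilde P_i$; hence $T$ preserves the decomposition $H\simeq\bigoplus_i \widetilde H_i$ and is block-diagonal, $T=\bigoplus_i T_i$ with $T_i\in \mathrm{End}_{\C}(\widetilde H_i)$. Moreover $T_i$ must commute with the action of the full summand $M_{m_i}(\C)$ on $\widetilde H_i$, so $B'\simeq\bigoplus_{i=1}^s \big(M_{m_i}(\C)\big)'$, the commutant being taken inside $\mathrm{End}_{\C}(\widetilde H_i)$. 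This reduces the statement to computing the commutant of $M_m(\C)$ acting on $\C^m\otimes\C^k$ by $a\cdot(v\otimes w)=(av)\otimes w$.

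For that single block, the key step is the standard computation that the commutant of $M_m(\C)\otimes 1_k$ inside $\mathrm{End}_{\C}(\C^m\otimes\C^k)\simeq M_m(\C)\otimes M_k(\C)$ is exactly $1_m\otimes M_k(\C)$. I would prove this either by the double commutant theorem (the commutant of $1_m\otimes M_k(\C)$ is clearly $M_m(\C)\otimes 1_k$, which is its own bicommutant since it is a von Neumann algebra, being finite-dimensional and $*$-closed), or more directly: writing $T\in M_m(\C)\otimes M_k(\C)$ as $T=\sum_{p,q} E_{pq}\otimes T_{pq}$ with $E_{pq}$ the matrix units of $M_m(\C)$ and $T_{pq}\in M_k(\C)$, the condition $[E_{rs}\otimes 1, T]=0$ for all $r,s$ forces $T_{pq}=\delta_{pq}T_0$ for a single $T_0\in M_k(\C)$ independent of $p$, i.e. $T=1_m\otimes T_0$. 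Transporting this back through the isomorphism $\widetilde H_i\simeq\C^{m_i}\otimes\C^{k_i}$ gives that $B'$ acts on $\widetilde H_i$ as $M_{k_i}(\C)$ by multiplication on the second tensor factor, as claimed.

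I do not expect a genuine obstacle here; the statement is essentially the finite-dimensional version of the commutation theorem, and the only care needed is bookkeeping: checking that $\widetilde P_i\in B$ (true, as each $\widetilde P_i$ is central in $B$ and $B$ is unital) so that block-diagonality of $B'$ is forced, and verifying that the irreducible-representation multiplicities $k_i$ in \eqref{eq:Hi} are well-defined (which is the content of the structure theorem for $*$-representations of $M_{m_i}(\C)$: every such representation is a direct sum of copies of the standard one). The mild subtlety, if any, is simply remembering that $B$ need not contain all of $\mathrm{End}_{\C}(\widetilde H_i)$—only the subalgebra $M_{m_i}(\C)\otimes 1_{k_i}$—so the commutant is genuinely the larger algebra $1_{m_i}\otimes M_{k_i}(\C)$ rather than just the center.
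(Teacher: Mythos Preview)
Your proposal is correct and follows essentially the same route as the paper: first use the central projections $\widetilde P_i\in B$ to force block-diagonality of $B'$, then reduce to the single block $\C^{m_i}\otimes\C^{k_i}$ and invoke the standard fact that the commutant of $M_{m_i}(\C)\otimes 1$ there is $1\otimes M_{k_i}(\C)$. The only difference is cosmetic: the paper simply calls the single-block step ``straightforward,'' whereas you spell it out via matrix units (or the double commutant theorem).
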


\begin{proof}
Let $f\in\mathrm{Hom}_B(H, H)$ be a $B$-linear map. Then $\widetilde{P}_j f(v)=f(\widetilde{P}_j v)=0$ 
for all $v\in H$ and all $j$ so  $f(\widetilde{H}_j)\subseteq\widetilde{H}_j$ and therefore
$$
\mathrm{End}_B(H)=\bigoplus\nolimits_{i=1}^s\mathrm{Hom}_B(\widetilde{H}_i,\widetilde{H}_i) 
= \bigoplus\nolimits_{i=1}^s\mathrm{End}_B(\widetilde{H}_i) \;.
$$
The rest of the proof is straightforward: the commutant of $M_{m_i}(\C)$ in $\C^{m_i} \otimes \C^{k_i}$ is given by $M_{k_i}(\C)$ acting on the second component of the tensor product, and $B'=\mathrm{End}_B(H)\simeq\bigoplus\nolimits_{i=1}^sM_{k_i}(\C)$.
\end{proof}

To check the \morita\ in the Standard Model case we will use the following simple observation.

\begin{lemma}\label{lemma:12}
Let $(A,H,D,J)$ be a finite-dimensional real spectral triple and %satisfying \secondorder. Suppose 
$B\subseteq\mathrm{End}_{\C}(H)$ a unital complex $*$-algebra satisfying:
$$
B'=B^\circ \qquad\text{and}\qquad
\Cl_D(A)\subseteq B \;.
$$
The following are equivalent:\vspace{5pt}
\begin{center}
\textup{(a)}\, the \morita\ is satisfied; \qquad
\textup{(b)}\, $\Cl_D(A)'\subseteq B^\circ$; \qquad
\textup{(c)}\, $\Cl_D(A)=B$.
\end{center}
\end{lemma}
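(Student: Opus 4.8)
The plan is to prove the cyclic chain of implications (a) $\Rightarrow$ (b) $\Rightarrow$ (c) $\Rightarrow$ (a), exploiting the fact that for a finite-dimensional $C^*$-subalgebra $B$ of $\mathrm{End}_{\C}(H)$ one has $B'' = B$ (the bicommutant theorem, trivial in finite dimensions), and that the operation $\xi \mapsto \xi^\circ = J\xi^* J^{-1}$ is an antilinear $*$-anti-isomorphism of $\mathrm{End}_\C(H)$ which is involutive (since $J^2 = \pm 1$) and hence sends $C^*$-subalgebras to $C^*$-subalgebras and reverses inclusions; in particular $(C^\circ)^\circ = C$ and $C_1 \subseteq C_2 \Leftrightarrow C_2^\circ \subseteq C_1^\circ$, and moreover $(C')^\circ = (C^\circ)'$ for any $C$.

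First, (a) $\Rightarrow$ (b) is immediate: the \morita\ says $\Cl_D(A)' = \Cl_D(A)^\circ$, and since $\Cl_D(A) \subseteq B$ we get $B^\circ \subseteq \Cl_D(A)^\circ = \Cl_D(A)'$, but that is the reverse of what (b) wants, so instead I read (a) directly as $\Cl_D(A)' = \Cl_D(A)^\circ$ and must check $\Cl_D(A)^\circ \subseteq B^\circ$; applying $\circ$ to $\Cl_D(A) \subseteq B$ gives exactly $\Cl_D(A)^\circ \subseteq B^\circ$, hence (b). Next, (b) $\Rightarrow$ (c): from $\Cl_D(A) \subseteq B$ take commutants to get $B' \subseteq \Cl_D(A)'$; combining with the hypothesis $B' = B^\circ$ and with (b) gives $B^\circ = B' \subseteq \Cl_D(A)' \subseteq B^\circ$, so all these are equal, in particular $\Cl_D(A)' = B' = B^\circ$. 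Now apply $\circ$: $\Cl_D(A)' = B^\circ$ gives, taking $\circ$ of both sides, $(\Cl_D(A)')^\circ = B$, i.e. $(\Cl_D(A)^\circ)' = B$ using $(C')^\circ = (C^\circ)'$. But we also have from (b)-equality that $\Cl_D(A)' = B^\circ$, and taking commutants of this (using $B'' = B$ and $(B^\circ)' = (B')^\circ = (B^\circ)^\circ = B$) yields $\Cl_D(A)'' = B$. Since $\Cl_D(A)$ is a unital $C^*$-subalgebra, $\Cl_D(A)'' = \Cl_D(A)$, hence $\Cl_D(A) = B$, which is (c).

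Finally, (c) $\Rightarrow$ (a) is the easiest: if $\Cl_D(A) = B$ then $\Cl_D(A)' = B' = B^\circ = \Cl_D(A)^\circ$, which is precisely the \morita. This closes the cycle and proves the equivalence.

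I do not expect a genuine obstacle here; the only points needing care are the bookkeeping of which inclusions reverse under $\circ$ and under taking commutants, and the explicit verification that $(C')^\circ = (C^\circ)'$ and that $\circ$ is involutive (both follow at once from $JaJ^{-1}\,JbJ^{-1} = J(ab)J^{-1}$, antilinearity, the $*$-preserving property $\xi \mapsto \xi^*$ composed with conjugation, and $J^2 = \varepsilon 1$ which makes $J^{-1}$ a scalar multiple of $J$ so that $(\xi^\circ)^\circ = J(J\xi^*J^{-1})^*J^{-1} = JJ^{-1*}\xi J^*J^{-1} = \xi$ after absorbing the scalars). One should also note $B$ is assumed unital so that $B'' = B$ and likewise $\Cl_D(A)$ is unital (it contains $A \ni 1$), so the double-commutant collapses are legitimate in finite dimensions.
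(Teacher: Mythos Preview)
Your proof is correct and follows the same cyclic chain (a)$\Rightarrow$(b)$\Rightarrow$(c)$\Rightarrow$(a) as the paper, with the double commutant theorem doing the work in (b)$\Rightarrow$(c). One slip to fix: in your preamble you assert that $\circ$ \emph{reverses} inclusions ($C_1\subseteq C_2 \Leftrightarrow C_2^\circ\subseteq C_1^\circ$), but since $\xi\mapsto\xi^\circ$ is a bijection of $\mathrm{End}_\C(H)$ it \emph{preserves} set inclusions---and indeed you use the correct direction (``applying $\circ$ to $\Cl_D(A)\subseteq B$ gives $\Cl_D(A)^\circ\subseteq B^\circ$'') in your actual (a)$\Rightarrow$(b) step, so only the preamble needs correcting. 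The paper's (b)$\Rightarrow$(c) is marginally more direct: from $\Cl_D(A)'\subseteq B^\circ=B'$ it takes commutants once to get $B=B''\subseteq\Cl_D(A)''=\Cl_D(A)$, without first establishing the intermediate equality $\Cl_D(A)'=B'$.
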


\begin{proof}~\vspace{-2pt}
\begin{list}{}{\leftmargin=1em \itemsep=3pt}
\item[(a) $\Rightarrow$ (b)]
$\Cl_D(A)\subseteq B$ implies $\Cl_D(A)^\circ \subseteq B^\circ$. If \morita\	holds then $\Cl_D(A)' =  \Cl_D(A)^\circ \subseteq B^\circ$.

\item[(b) $\Rightarrow$ (c)]
The condition $\Cl_D(A)'\subseteq B^\circ = B'$ implies $B \subseteq\Cl_D(A)$ and, since the opposite inclusion holds by hypothesis, $\Cl_D(A)=B$.

\item[(c) $\Rightarrow$ (a)]
If (c) holds,	then $B'=B^\circ$ translates into $\Cl_D(A)'=\Cl_D(A)^\circ$.\vspace{-19pt}
\end{list}
\end{proof}

The internal Dirac operators for the Standard Model satisfying the \secondorder\ can be divided in four (not disjoint) classes, cf.~Prop.~\ref{prop:5}.
The strategy to check the \morita\ in the Standard Model case, in each of the four aforementioned cases, will be the following:\label{strategy}
\begingroup
\setlength{\leftmargini}{1.8em} %itemize/quote left margin
\begin{enumerate}\itemsep=2pt
\item
We define a suitable big algebra $B$ containing the Clifford algebra $\Cl_D(A)$, which is
independent of $D$.
\item
We check that $B$ and $B^\circ$ commute, and so $B^\circ\subseteq B'$.
\item
We prove that $B^\circ = B'$ using the dimensional arguments: since, the algebras are finite-dimensional (as vector spaces) one has just to check that $B^\circ$ and $B'$ have the same dimension. Since $\dim(B^\circ)=\dim(B)$, it suffices to check that $\dim(B)=\dim(B')$, using Lemma \ref{lemma:11}.
\item
With Lemma \ref{lemma:12} we find under what conditions on $D$ one has 
$\Cl_D(A)'\subseteq B^\circ$ or (equivalently) $\Cl_D(A) = B$. 
\end{enumerate}
\endgroup

\section{The Standard Model spectral triple}\label{sec:4}
Let us recall the data $(A,H,\gamma,J)$ describing the finite noncommutative space of the Standard Model of elementary particles.
The Dirac operator $D$ is arbitrary for the time being.
To simplify the discussion, we will consider a model with only one generation of particles. In \S\ref{sec:6} we'll comment on what changes with three (or in general more than one) generations.

We adopt the notations of \cite{DD14}.
The Hilbert subspace representing particles is $F=M_4(\C)$ with inner product $\langle a,b\rangle=\tr(a^*b)$, where $a^*$ is the Hermitian conjugated of $a$. Let $e_{ij}\in M_N(\C)$ be the matrix with $1$ in position $(i,j)$ and zero everywhere else (we will omit the size of the matrix, which should be clear from the context).
We arrange particles in a $4\times 4$ matrix in the following way:
\begin{equation}
\begin{bmatrix}
\nu_R & u^1_R & u^2_R & u^3_R \\
e_R   & d^1_R & d^2_R & d^3_R \\
\nu_L & u^1_L & u^2_L & u^3_L \\
e_L   & d^1_L & d^2_L & d^3_L
\end{bmatrix} \;.
\label{matrixrep}
\end{equation}
So, for example $e_{21}$ represents a right-handed electron, while $\frac{1}{\sqrt{2}}e_{11}+\frac{1}{\sqrt{2}}e_{41}$ is a mix of a right-handed neutrino and a left-handed electron.

The Hilbert space of our spectral triple is $H\simeq F\oplus F^*$. We write its elements in the form:
$$
H=\bigg\{\;
\ve{v}{w} \;\bigg|\, v,w\in M_4(\C) \bigg\}
$$
The real structure is given by
$$
J\ve{v}{w}=\ve{w^*}{v^*}
$$
The grading $\gamma$ on $F$ is the operator of left multiplication by the diagonal matrix
$$
\mathrm{diag}(+1,+1,-1,-1)
$$
(the ``chirality'' operator), and its action on $F^*$ is determined by the condition $\gamma J+J\gamma=0$ (the signs in \eqref{eq:KOdim} are those corresponding to KO-dimension $6$).

We identify $\mathrm{End}_{\C}(H)$ with the algebra $M_4(\C)\otimes M_2(\C)\otimes M_4(\C)$, represented on $H$ as follows:
$$
\pi(\alpha\otimes 1\otimes \beta)\ve{v}{w}=
\ve{\alpha v\beta^t}{\alpha w\beta^t}
\qquad\quad
\pi\bigg(1\otimes\text{{\small%
\setlength{\arraycolsep}{4pt}%
$
\bigg[\begin{array}{cc}
a & b \\[-2pt] c & d
\end{array}\bigg]
$}}\otimes 1\bigg)\ve{v}{w}=
\ve{ av+bw }{ cv+dw }
$$
for all $\alpha,\beta,v,w\in M_4(\C)$, $a,b,c,d\in\C$. This gives a unital $*$-representation.
With this notations:
\begin{equation}\label{eq:circle}
\xi=\pi\bigg(\alpha\otimes\text{{\small%
\setlength{\arraycolsep}{4pt}%
$
\bigg[\begin{array}{cc}
a & b \\[-2pt] c & d
\end{array}\bigg]
$}}\otimes\beta\bigg)
\mapsto\xi^\circ:=J\xi^*J^{-1}=
\pi\bigg(\beta^t\otimes\text{{\small%
\setlength{\arraycolsep}{4pt}%
$
\bigg[\begin{array}{cc}
d & b \\[-2pt] c & a
\end{array}\bigg]
$}}\otimes\alpha^t\bigg)
\end{equation}
The representation symbol $\pi$ from now on will be omitted.

The algebra $A\simeq\C\oplus\mathbb{H}\oplus M_3(\C)$ has elements
\begin{equation}\label{eq:A}
%(\lambda,q,m):=
\left[\!
\begin{array}{c|c}
\begin{matrix} \;\lambda\; & \;0\;  \\ 0 & \overline\lambda \end{matrix} &
\begin{matrix} \;0\; & \;0\;  \\ 0 & 0 \end{matrix} \\
\hline
\begin{matrix} \;0\; & \;0\; \\ 0 & 0 \end{matrix} & q
\end{array}
\!\right]
\otimes e_{11}\otimes 1
+
\left[\!
\begin{array}{c|ccc}
\lambda & \;0\; & \;0\; & \;0\; \\
\hline
\begin{matrix} \;0\; \\ 0 \\ 0 \end{matrix} && m
\end{array}
\!\right]
\otimes e_{22}\otimes 1
\end{equation}
with $\lambda\in\C$, $q\in\Q$ a quaternion and $m\in M_3(\C)$.

The most general Dirac operator giving a real spectral triple of KO-dimension~$6$ can be found e.g.~in \cite[\S5.1]{DD14}. In the notations of \eqref{eq:D0D1DR} one has $D_2=0$,
$$
D_R=e_{11}\otimes(\Upsilon_Re_{21}+\overline{\Upsilon}_Re_{12})\otimes e_{11} \;,
$$
for some $\Upsilon_R\in\C$, $D_1=JD_0J^{-1}$ and:
\begin{align*}
\hspace*{-3mm}D_0 &=
\left[\!\begin{array}{cc|cc}
\zero & \zero & \alpha_{13} & \alpha_{14} \\
\zero & \zero & \alpha_{23} & \alpha_{24} \\
\hline
\overline\alpha_{13} & \overline\alpha_{23} & \zero & \zero \\
\overline\alpha_{14} & \overline\alpha_{24} & \zero & \zero
\end{array}\!\right]
\otimes e_{11}\otimes e_{11}+
\left[\!\begin{array}{cc|cc}
\zero & \zero & \beta_{13} & \beta_{14} \\
\zero & \zero & \beta_{23} & \beta_{24} \\
\hline
\overline\beta_{13} & \overline\beta_{23} & \zero & \zero \\
\overline\beta_{14} & \overline\beta_{24} & \zero & \zero
\end{array}\!\right]
\otimes e_{11}\otimes (1-e_{11}) \\
&\; +
\left[\!\begin{array}{cc|cc}
\zero & \delta_{12} & \delta_{13} & \delta_{14} \\
\delta_{21} & \delta_{22} & \delta_{23} & \delta_{24} \\
\hline
\zero & \zero & \zero & \zero \\
\zero & \zero & \zero & \zero
\end{array}\!\right]
\otimes e_{12}\otimes e_{11}+
\left[\!\begin{array}{cc|cc}
\zero & \overline\delta_{21} & \zero & \zero \\
\overline\delta_{12} & \overline\delta_{22} & \zero & \zero \\
\hline
\overline\delta_{13} & \overline\delta_{23} & \zero & \zero \\
\overline\delta_{14} & \overline\delta_{24} & \zero & \zero
\end{array}\!\right]
\otimes e_{21}\otimes e_{11}
\end{align*}
where $\alpha_{ij},\beta_{ij},\delta_{ij}\in\C$ and zeroes are omitted. Note that $D_R$, as expected from Remark \ref{rem:11}, $D_R$ is zero on all subspaces $H_{ij}$ where $\gamma$ is proportional to the identity, that is all but one: the one spanned by $\nu_R$ and $J(\nu_R)$.

Observe that $D_R\in A'$, so that the Clifford algebra is generated by $A_{\C}$ and $D_0$ (Remark \ref{rem:8}).
For future reference, let us also define:
\begin{equation}\label{eq:permmatrix}
U:=1\otimes 1\otimes 1+e_{11}\otimes (e_{12}+e_{21}-1)\otimes e_{11} \;.
\end{equation}
This is a permutation matrix: $U=U^*$ and $U^2=1$ (hence a unitary). It's action on the Hilbert space is to exchange the basis vectors $\nu_R$ and $J(\nu_R)$.

\begin{lemma}\label{lemma:14}
$U$ commutes with $A$ and $J$.
\end{lemma}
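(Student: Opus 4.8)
The plan is to establish the two commutation relations $[U,J]=0$ and $[U,a]=0$ (for all $a\in A$) separately, working directly with the explicit matrices introduced above.

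For $[U,J]=0$: since $U=U^{*}$, this is equivalent to $U^{\circ}=JUJ^{-1}=U$, which I would read off from \eqref{eq:circle}. Writing $U$ as the sum of its two summands $1\otimes 1\otimes 1$ and $e_{11}\otimes(e_{12}+e_{21}-1)\otimes e_{11}$, the operation $\xi\mapsto\xi^{\circ}$ transposes the two outer $M_{4}(\C)$-legs (and $e_{11}^{t}=e_{11}$, $1^{t}=1$) while replacing the middle $M_{2}(\C)$-matrix $\left[\begin{smallmatrix}a&b\\c&d\end{smallmatrix}\right]$ by $\left[\begin{smallmatrix}d&b\\c&a\end{smallmatrix}\right]$. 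Both middle factors occurring in $U$, namely the identity and $e_{12}+e_{21}-1=\left[\begin{smallmatrix}-1&1\\1&-1\end{smallmatrix}\right]$, have equal diagonal entries, hence are fixed by this operation; so $U^{\circ}=U$.

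For $[U,a]=0$: I would first record, directly from the definition \eqref{eq:permmatrix}, that $U\ve{v}{w}=\ve{v+(w_{11}-v_{11})e_{11}}{w+(v_{11}-w_{11})e_{11}}$, so that $U$ only affects the coefficients $v_{11}$, $w_{11}$ of $\nu_{R}$ and $J(\nu_{R})$, which it interchanges (as anticipated after \eqref{eq:permmatrix}); equivalently, $U$ acts as the flip on $H_{0}:=\mathrm{span}\{\nu_{R},J(\nu_{R})\}=\bigl\{\ve{c_{1}e_{11}}{c_{2}e_{11}}:c_{1},c_{2}\in\C\bigr\}$ and as the identity on $H_{1}:=H_{0}^{\perp}=\bigl\{\ve{v}{w}:v_{11}=w_{11}=0\bigr\}$. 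On the other side, a generic $a\in A$ as in \eqref{eq:A} acts by $\ve{v}{w}\mapsto\ve{A_{1}v}{A_{2}w}$, where $A_{1},A_{2}\in M_{4}(\C)$ are the two $4\times4$ blocks in \eqref{eq:A}; inspecting \eqref{eq:A} one sees that in each of $A_{1},A_{2}$ the only nonzero entry in the first row and in the first column is the $(1,1)$ entry, equal to the \emph{same} scalar $\lambda$ (the $\C$-component of $a$). Hence $A_{i}e_{11}=\lambda e_{11}$ and $(A_{i}v)_{11}=\lambda v_{11}$, so $a$ preserves both $H_{0}$ and $H_{1}$ and restricts to the scalar $\lambda$ on $H_{0}$. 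With respect to $H=H_{0}\oplus H_{1}$ we then have $U=\sigma\oplus\mathrm{id}$ and $a=\lambda\,\mathrm{id}_{H_{0}}\oplus a|_{H_{1}}$; since $\sigma$ commutes with $\lambda\,\mathrm{id}_{H_{0}}$, we conclude $[U,a]=0$.

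There is no essential obstacle, only bookkeeping: the whole argument hinges on the $(1,1)$-entries of $A_{1}$ and $A_{2}$ being one and the same scalar $\lambda$ --- precisely the feature of $A\simeq\C\oplus\mathbb{H}\oplus M_{3}(\C)$ that makes the exchange $U$ compatible with $A$. Equivalently, one can forgo the $H_{0}\oplus H_{1}$ language and expand $[\,U-1,\;A_{1}\otimes e_{11}\otimes 1+A_{2}\otimes e_{22}\otimes 1\,]$ using $e_{11}e_{22}=e_{22}e_{11}=0$ together with $e_{11}(e_{12}+e_{21}-1)=e_{12}-e_{11}$, $(e_{12}+e_{21}-1)e_{11}=e_{21}-e_{11}$ and the analogous identities with $e_{22}$: the two contributions are $\lambda\,e_{11}\otimes(e_{21}-e_{12})\otimes e_{11}$ and $\lambda\,e_{11}\otimes(e_{12}-e_{21})\otimes e_{11}$, whose sum is $0$.
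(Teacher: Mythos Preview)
Your proof is correct and follows essentially the same approach as the paper, which simply records ``Explicit computation, in particular $JUJ^{-1}=U$ is an immediate consequence of \eqref{eq:circle}.'' You have spelled out exactly that computation: the $J$-commutation via $U^{\circ}=U$ from \eqref{eq:circle}, and the $A$-commutation by a direct check (your $H_{0}\oplus H_{1}$ decomposition is a clean way to organise it, and your alternative tensor-product expansion is the raw version of the same calculation).
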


\begin{proof}
Explicit computation, in particular $JU J^{-1}=U$ is an immediate consequence of \eqref{eq:circle}.
\end{proof}

Let us now reproduce the result of \cite{FB14a}, that is Prop.~\ref{prop:5} below.

\begin{prop}\label{prop:5}
The \secondorder\ is satisfied if and only if one (at least) of the following four conditions holds:
\begin{itemize}
\item[1.] \ $\delta_{ij}=0$ for all $i,j$;
\item[2.] \ $\alpha_{13}=\alpha_{14}=0$ and $\delta_{ij}=0\;\forall\;i=1,2$ and $j=2,3,4$;
\item[3.] \ $\delta_{21} = 0$ and $\beta_{13}=\beta_{14}=0$;
\item[4.] \ $\delta_{12}=\delta_{13}=\delta_{14} = 0$, \ $\beta_{13}=\beta_{14}=0$, \ $\alpha_{13}=\alpha_{14}=0$.
\end{itemize}
\end{prop}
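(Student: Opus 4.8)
The plan is to reduce the $2$nd order condition to the single operator equation $[D_0,D_1]=0$ via Prop.~\ref{prop:9} (applicable since $D_R\in A'$), and then to solve this equation explicitly using the block decomposition of $D_0$ given above. First I would recall that $D_1=JD_0J^{-1}$, so by \eqref{eq:circle} the operator $D_1$ is obtained from $D_0$ by swapping the first and third tensor legs (and conjugating the middle $2\times 2$ block appropriately). Writing $D_0=\sum_r A_r\otimes \mu_r\otimes \nu_r$ as a sum of the four displayed terms, one gets $D_1=\sum_r \overline{\nu}_r^{\,t}\otimes \mu_r'\otimes \overline{A}_r^{\,t}$, where the $\mu_r'$ are the middle-leg images under \eqref{eq:circle}. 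The commutator $[D_0,D_1]$ then expands into a sum of terms of the form $(A_r\,\overline{\nu}_s^{\,t}-\overline{\nu}_s^{\,t}\,?\,)\otimes\cdots$, but because $D_0$ only involves $e_{11}$ and $e_{12},e_{21}$ in the first leg and $e_{11},1-e_{11}$ in the third, while $D_1$ involves these same matrices with the legs exchanged, many cross-terms vanish automatically on the relevant subspaces. The task is to catalogue which products survive.

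The key computational step is therefore: project $[D_0,D_1]=0$ onto the subspaces $H_{ij}=P_iQ_jH$ and onto the finer grid indexed by the tensor-leg matrix units. I would organize the middle $M_2(\C)$ leg first, since $D_0$ has middle-leg components $e_{11}$ (the $\alpha,\beta$ terms) and $e_{12},e_{21}$ (the $\delta$ terms), so $D_1$ has middle-leg components $e_{11}$ and $e_{21},e_{12}$ respectively; the products $e_{11}e_{11}$, $e_{11}e_{12}$, $e_{12}e_{11}$, $e_{12}e_{21}$, etc., split the commutator into blocks. Within each middle block the first and third legs give ordinary matrix commutators/products of the small matrices appearing in the displayed $D_0$. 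Carrying this out produces a finite list of polynomial equations in $\{\alpha_{ij},\beta_{ij},\delta_{ij}\}$. The main obstacle — and the part requiring genuine care rather than routine bookkeeping — is showing that the resulting variety is exactly the union of the four affine subspaces listed in conditions 1–4, i.e.\ that the quadratic equations one obtains factor in precisely this way, with no spurious extra components and no missing constraints. Concretely one expects equations like $\delta_{1j}\alpha_{1k}=0$, $\delta_{1j}\beta_{1k}=0$, $\delta_{21}\alpha_{13}=0$, $\delta_{21}\beta_{13}=0$, and products $\delta_{2j}\delta_{1k}$-type terms, and one must argue that vanishing of all of them is equivalent to at least one of the four stated conditions.

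Finally I would assemble the argument in the two directions. For the ``if'' direction it suffices to substitute each of the four conditions into $[D_0,D_1]$ and verify the commutator vanishes — four short checks, each killing enough terms of $D_0$ (equivalently $D_1$) that the surviving pieces commute. For the ``only if'' direction I would take the full list of monomial equations extracted above and do a short case analysis: if some $\delta_{1j}\neq 0$ then all $\alpha_{1k}$ and $\beta_{1k}$ must vanish (pushing toward condition 2 or 4), if $\delta_{21}\neq0$ then $\beta_{13}=\beta_{14}=0$ and the $\alpha_{1k}$ are forced (condition 3 or 4), and if all $\delta_{ij}=0$ we are in condition 1; tracking the interaction between the $\delta_{1j}$ constraints and the $\delta_{21}$ constraints yields exactly the overlap pattern encoded in the four items. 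I expect no deep difficulty beyond keeping the index bookkeeping honest; the delicate point is simply not to drop or double-count any of the block equations when matching the solution set to the four listed classes.
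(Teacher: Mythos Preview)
Your plan is essentially the paper's own proof: reduce to $[D_0,D_1]=0$ via Prop.~\ref{prop:9}, compute $D_1$ from \eqref{eq:circle}, split by the middle $M_2(\C)$ leg, and read off monomial equations. One streamlining the paper uses that you may find convenient: set $\Delta:=D_0D_1$, so $[D_0,D_1]=\Delta-\Delta^*$ and the condition becomes $\Delta=\Delta^*$; then $\Delta_{21}=0$ identically, $\Delta_{11}=\Delta_{11}^*$ gives $\delta_{21}\overline{\delta}_{1j}=0$, and $\Delta_{12}=0$ factors into two independent tensor terms yielding $\delta_{21}\overline{\alpha}_{1k}=0$ and $\delta_{ij}\overline{\beta}_{1k}=0$ (for $i=1,2$, $j=2,3,4$, $k=3,4$).

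A caution on your sketched case analysis: your tentative monomials and branching are slightly off. There is no $\delta_{1j}\alpha_{1k}$-type equation; and $\delta_{21}\neq 0$ does \emph{not} force $\beta_{13}=\beta_{14}=0$ (case~2 allows $\beta$ arbitrary), nor can it lead to case~3, which requires $\delta_{21}=0$. The clean way to organize the ``only if'' direction is to note that the equation list is exactly the conjunction of three binary disjunctions --- $(\delta_{21}=0)\vee(\alpha_{13}=\alpha_{14}=0)$, $(\beta_{13}=\beta_{14}=0)\vee(\delta_{ij}=0\text{ for }i=1,2,\,j=2,3,4)$, and $(\delta_{21}=0)\vee(\delta_{12}=\delta_{13}=\delta_{14}=0)$ --- and then check that the resulting union of linear subspaces collapses to the four listed cases.
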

\begin{proof}
Let $\Delta:=D_0D_1$. From Prop.~\ref{prop:9} we know that \secondorder\ is satisfied if and only if $[D_0,D_1]=\Delta-\Delta^*$ is zero, which means $\Delta=\Delta^*$. From \eqref{eq:circle} we get
\begin{align*}
\hspace*{-2mm}D_1 &=
e_{11}\otimes e_{22}\otimes
\left[\!\begin{array}{cc|cc}
\zero & \zero & \overline\alpha_{13} & \overline\alpha_{14} \\
\zero & \zero & \overline\alpha_{23} & \overline\alpha_{24} \\
\hline
\alpha_{13} & \alpha_{23} & \zero & \zero \\
\alpha_{14} & \alpha_{24} & \zero & \zero
\end{array}\!\right]+
(1-e_{11})\otimes e_{22}\otimes
\left[\!\begin{array}{cc|cc}
\zero & \zero & \overline\beta_{13} & \overline\beta_{14} \\
\zero & \zero & \overline\beta_{23} & \overline\beta_{24} \\
\hline
\beta_{13} & \beta_{23} & \zero & \zero \\
\beta_{14} & \beta_{24} & \zero & \zero
\end{array}\!\right] \\
&\; +
e_{11}\otimes e_{12}\otimes
\left[\!\begin{array}{cc|cc}
\zero & \delta_{21} & \zero & \zero \\
\delta_{12} & \delta_{22} & \zero & \zero \\
\hline
\delta_{13} & \delta_{23} & \zero & \zero \\
\delta_{14} & \delta_{24} & \zero & \zero
\end{array}\!\right]
+e_{11}\otimes e_{21}\otimes
\left[\!\begin{array}{cc|cc}
\zero & \overline\delta_{12} & \overline\delta_{13} & \overline\delta_{14} \\
\overline\delta_{21} & \overline\delta_{22} & \overline\delta_{23} & \overline\delta_{24} \\
\hline
\zero & \zero & \zero & \zero \\
\zero & \zero & \zero & \zero
\end{array}\!\right]
\end{align*}
Let $\Delta_{ij}$ be the summand in $\Delta$ that has $e_{ij}$ as second factor.
Then $\Delta=\Delta^*$ if{}f
$\Delta_{11}$ is selfadjoint (which implies that $\Delta_{22}=(\Delta_{11})^\circ$ is also selfadjoint) and $(\Delta_{12})^*=\Delta_{21}$.

With a simple computation one finds:
$$
\Delta_{11} =
\left[\!\begin{array}{cc|cc}
\zero & \zero & \zero & \zero \\
\delta_{21} & \zero & \zero & \zero \\
\hline
\zero & \zero & \zero & \zero \\
\zero & \zero & \zero & \zero
\end{array}\!\right]
\otimes e_{11}\otimes
\left[\!\begin{array}{cc|cc}
\zero & \overline\delta_{12} & \overline\delta_{13} & \overline\delta_{14} \\
\zero & \zero & \zero & \zero \\
\hline
\zero & \zero & \zero & \zero \\
\zero & \zero & \zero & \zero
\end{array}\!\right] \;,
$$
that is selfadjoint if{}f it is zero, that is
$\delta_{21}=0$ or $\delta_{1i}=0\;\forall\;i=2,3,4$.
Moreover $\Delta_{21}=0$, while $\Delta_{12}$ is the sum of four linearly independent terms given by
\begin{subequations}\label{eq:8}
\begin{gather}
\left[\!\begin{array}{cc|cc}
\zero & \zero & \zero & \zero \\
\delta_{21} & \zero & \zero & \zero \\
\hline
\zero & \zero & \zero & \zero \\
\zero & \zero & \zero & \zero
\end{array}\!\right]
\otimes e_{12}\otimes
\left[\!\begin{array}{cc|cc}
\zero & \zero & \overline\alpha_{13} & \overline\alpha_{14} \\
\zero & \zero & \zero & \zero \\
\hline
\zero & \zero & \zero & \zero \\
\zero & \zero & \zero & \zero
\end{array}\!\right] \;,
\\
\left[\!\begin{array}{cc|cc}
\zero & \delta_{12} & \delta_{13} & \delta_{14} \\
\zero & \delta_{22} & \delta_{23} & \delta_{24} \\
\hline
\zero & \zero & \zero & \zero \\
\zero & \zero & \zero & \zero
\end{array}\!\right]
\otimes e_{12}\otimes
\left[\!\begin{array}{cc|cc}
\zero & \zero & \overline\beta_{13} & \overline\beta_{14} \\
\zero & \zero & \zero & \zero \\
\hline
\zero & \zero & \zero & \zero \\
\zero & \zero & \zero & \zero
\end{array}\!\right] \;,
\end{gather}
\end{subequations}
and the image of these two operators through \eqref{eq:circle}.
So $\Delta_{12}=(\Delta_{21})^*=0$ if and only if the elements in \eqref{eq:8} vanish, that gives us the four conditions in Prop.~\ref{prop:5}.
\end{proof}

For future reference we recall that \cite[Lemma 7]{DD14}:
\begin{equation}\label{eq:setbelow}
\begin{split}
A'=\mathrm{Span}\big\{\,
e_{22}\otimes e_{11}
\,,\,
(e_{33}+e_{44})\otimes e_{11}
\,,\,
(1-e_{11})\otimes e_{22}
\big\}\otimes M_4(\C)\;\;\oplus
\\
\oplus\;\; \C e_{11}\otimes M_2(\C)\otimes M_4(\C)
 \;.
\end{split}
\end{equation}

\section{The \morita}\label{sec:5}
In this section we characterize, among the Dirac operators satisfying the \secondorder, those satisfying the \morita\ as well.
We will see that, in the vector space parametrizing Dirac operators satisfying the \secondorder, those not satisfying the \morita\ form a measure zero subset.

As explained in \S\ref{sec:3} (page \pageref{strategy}), we start by defining some ``big'' algebra $B$ satisfying the conditions in Lemma \ref{lemma:12}. In fact, we need two of them, to cover the four cases in Prop.~\ref{prop:5}.

\subsection{Cases 1 and 2.} Let
\begin{equation}\label{eq:Bprime}
B:=\C\oplus M_3(\C)\oplus M_4(\C)\oplus M_4(\C)
\end{equation}
and $\pi$ the representation on $H$ given by:
\begin{align}
\pi(\lambda,m,a,b) &:=\left[\!
\begin{array}{c|ccc}
\lambda & \;0\; & \;0\; & \;0\; \\
\hline
\begin{matrix} \;0\; \\ 0 \\ 0 \end{matrix} && m
\end{array}
\!\right]
\otimes e_{22}\otimes 1 
\notag\\
& \qquad +a\otimes e_{11}\otimes e_{11} 
       +b\otimes e_{11}\otimes (1-e_{11}) \;, \label{eq:ClDA}
\end{align}
for all $\lambda\in\C$, $m\in M_3(\C)$ and $a,b\in M_4(\C)$. 
Let us identify $B$ with its representation, and omit the representation symbol.
Using \eqref{eq:circle} one easily checks $B^\circ\subseteq B'$.

In the notations of \eqref{eq:Hi}, the representation of $B$ is equivalent to the one (by matrix multiplication on the first leg)
on:
$$
(\C \otimes \C^4) \oplus (\C^3 \otimes \C^4) \oplus 
(\C^4 \otimes \C) \oplus (\C^4 \otimes \C^3).
$$
From the Lemma~\ref{lemma:11}, $B'\simeq M_4(\C) \oplus M_4(\C) \oplus\C \oplus M_3(\C) \simeq B$ and we have $B^\circ=B'$.

\begin{thm}\label{thm:15}
Let $D_0$ be as in \S\ref{sec:4} with $\delta_{ij}=0$ for all $i,j$ (Prop.~\ref{prop:5}, case 1). Then the \morita\ is satisfied if and only if:
(i) the matrices
\begin{equation}\label{eq:matrices}
\alpha:=\left[\!\begin{array}{cc}
\alpha_{13} & \alpha_{14} \\
\alpha_{23} & \alpha_{24}
\end{array}\!\right]
\qquad\qquad
\beta:=\left[\!\begin{array}{cc}
\beta_{13} & \beta_{14} \\
\beta_{23} & \beta_{24}
\end{array}\!\right]
\end{equation}
have no zero rows and
(ii) there are no $\phi,\psi\in\R$ such that
\begin{equation}\label{eq:14}
\alpha=\left[\!\begin{array}{cc}
e^{i\phi} & \zero \\
\zero & e^{i\psi}
\end{array}\!\right]\beta \;.
\end{equation}
\end{thm}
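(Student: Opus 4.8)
The plan is to follow the four-step strategy outlined on page~\pageref{strategy}, with $B$ the algebra in \eqref{eq:Bprime}. Since steps~(1)--(3) have already been carried out above ($B\supseteq\Cl_D(A)$ because in case~1 the Clifford algebra is generated by $A_\C$ and $D_0$, both of which lie in $B$ by inspection; and $B^\circ=B'$ has been verified), it remains to execute step~(4): by Lemma~\ref{lemma:12}, the \morita\ holds if and only if $\Cl_D(A)=B$. So the whole proof reduces to determining, in terms of the entries $\alpha_{ij},\beta_{ij}$, exactly when the $*$-algebra generated by $A_\C$ and $D_0$ fills up all of $B$.

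First I would make $B$ more transparent. The summand $\C e_{11}\otimes M_2(\C)\otimes 1$ of $A$ (the $\lambda$-part acting on the $\nu_R/e_R$ block and on $F^*$) together with $D_R$ is irrelevant here; what matters is that $A_\C$ restricted to the relevant blocks gives the diagonal subalgebras, while $D_0$ (with $\delta_{ij}=0$) is block-off-diagonal, exchanging the ``right'' and ``left'' halves inside each of the two $M_4(\C)$ summands of $B$. Concretely, on the summand $a\otimes e_{11}\otimes e_{11}$ of $B$ (the one carrying the leptonic $4\times4$ matrix $a$), the algebra generated by the diagonal part of $A_\C$ and by $D_0$ is the one generated by $\mathrm{diag}(\C,\C,*,*)\subseteq M_4(\C)$ (the first two diagonal entries being forced equal, as in \eqref{eq:A}) together with the off-diagonal $2\times2$ block $\left[\begin{smallmatrix}0&\alpha\\ \alpha^*&0\end{smallmatrix}\right]$; similarly on the $b\otimes e_{11}\otimes(1-e_{11})$ summand with $\beta$ in place of $\alpha$. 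So the question becomes: \emph{when does $\{\,\mathrm{diag}(x,x,y,z):x,y,z\in\C\,\}$ together with $\left[\begin{smallmatrix}0&\alpha\\ \alpha^*&0\end{smallmatrix}\right]$ generate all of $M_4(\C)$?}

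Here is where the key step lies, and I expect it to be the main obstacle: classifying the $2\times2$ matrices $\alpha$ for which this generation fails. Commuting the diagonal elements through the off-diagonal generator produces all matrices of the form $\left[\begin{smallmatrix}0& d\alpha\\ \alpha^* d'&0\end{smallmatrix}\right]$ with $d,d'$ diagonal; multiplying two such gives block-diagonal elements $\alpha d\alpha^* d''$ etc. A standard double-commutant argument says the generated algebra is all of $M_4(\C)$ precisely when the only matrices commuting with all of these are scalars. One then checks that a nonscalar commutant element exists exactly when either (a) $\alpha$ has a zero row --- in that case a basis vector of $\C^2$ on the ``$\alpha^*$ side'' is annihilated and the representation splits off a one-dimensional invariant subspace --- or (b) after rescaling the two basis vectors of the target $\C^2$ by phases $e^{i\phi},e^{i\psi}$, the matrix $\alpha$ becomes such that $\left[\begin{smallmatrix}0&\alpha\\ \alpha^*&0\end{smallmatrix}\right]$ together with the diagonal subalgebra preserves an extra decomposition; the precise algebraic condition for this reducibility works out to $\alpha=\mathrm{diag}(e^{i\phi},e^{i\psi})\,\beta$, i.e.\ condition~\eqref{eq:14}. (The appearance of $\beta$ rather than another copy of $\alpha$ is because the two $M_4(\C)$ summands of $B$ are tied together: $A_\C$ acts on them by the \emph{same} leptonic/quark matrices up to the $e_{11}$ versus $1-e_{11}$ split, so reducibility of the pair of representations, not of each separately, is what costs a dimension.)

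Granting the classification, the conclusion is immediate: if (i) and (ii) hold then the generated algebra has full dimension $\dim B$, hence equals $B$, and the \morita\ follows from Lemma~\ref{lemma:12}(c)$\Rightarrow$(a); conversely, if (i) or (ii) fails, $\Cl_D(A)$ is a proper subalgebra of $B$, so $\Cl_D(A)^\circ\subsetneq B^\circ=B'\subseteq\Cl_D(A)'$ and the \morita\ fails. I would present the reducibility analysis as a short self-contained lemma about subalgebras of $M_4(\C)$ generated by a diagonal subalgebra and one Hermitian off-diagonal block, then simply quote it; the bookkeeping of which phases and which zero-row patterns occur is routine once that lemma is in place.
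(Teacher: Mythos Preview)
Your strategy via Lemma~\ref{lemma:12}(c) --- show directly that $\Cl_D(A)=B$ --- is a legitimate dual to the paper's route via Lemma~\ref{lemma:12}(b). The paper instead parametrizes the quotient $A'/B^\circ$ explicitly (the element $\xi-\eta$ of \eqref{eq:genB}), writes out the linear system $[D_0,\xi-\eta]=0$, and reads off (i) and (ii) from the equations; no structural lemma about generated subalgebras is needed. Your approach is more conceptual, but as written it has two concrete gaps.

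\medskip

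\textbf{First}, your description of the image of $A_\C$ in each $M_4(\C)$ summand is wrong. In \eqref{eq:A} the upper $2\times 2$ block on the particle sector is $\mathrm{diag}(\lambda,\bar\lambda)$, not $\mathrm{diag}(\lambda,\lambda)$; after passing to the complex algebra $A_\C$ these two entries become independent. Moreover the lower $2\times 2$ block carries the quaternions, whose complexification is all of $M_2(\C)$, not just diagonals. So the correct image is $\C\oplus\C\oplus M_2(\C)\subset M_4(\C)$, not $\{\mathrm{diag}(x,x,y,z)\}$. This matters: the commutant of the correct subalgebra in $M_4(\C)$ is $\C\oplus\C\oplus\C$, and requiring $\bigl[\begin{smallmatrix}0&\alpha\\ \alpha^*&0\end{smallmatrix}\bigr]$ to cut it to scalars is \emph{exactly} the condition that $\alpha$ have no zero row --- nothing more. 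With your subalgebra the commutant would be $M_2(\C)\oplus\C\oplus\C$ and the analysis would come out differently.

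\medskip

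\textbf{Second}, and more seriously, your account of condition~(ii) is misplaced. Condition~\eqref{eq:14} is \emph{not} a reducibility phenomenon inside a single $M_4(\C)$; once (i) holds, each $M_4(\C)$ factor is already generated in full. What (ii) detects is whether the image of $\Cl_D(A)$ in $M_4(\C)\oplus M_4(\C)$ is everything or only a twisted diagonal $\{(UxU^{-1},x):x\in M_4(\C)\}$. Since $A_\C$ sits diagonally, any such $U$ must centralize $\C\oplus\C\oplus M_2(\C)$, hence $U=\mathrm{diag}(c_1,c_2,c_3,c_3)$ with $|c_i|=1$; the requirement $U\bigl[\begin{smallmatrix}0&\alpha\\ \alpha^*&0\end{smallmatrix}\bigr]U^{-1}=\bigl[\begin{smallmatrix}0&\beta\\ \beta^*&0\end{smallmatrix}\bigr]$ then unwinds to $\alpha=\mathrm{diag}(e^{i\phi},e^{i\psi})\beta$. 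Your parenthetical remark gestures at this, but the ``self-contained lemma about subalgebras of $M_4(\C)$'' you propose cannot yield \eqref{eq:14}: the right lemma is a Goursat-type statement about unital $*$-subalgebras of $M_4(\C)\oplus M_4(\C)$ surjecting onto each factor. If you supply that lemma (and fix the description of $A_\C$), your argument goes through; as it stands, the crucial step is not there.
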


\begin{proof}
Let $B$ be the algebra \eqref{eq:Bprime} (identified with its representation on $H$).
Note that $A\subseteq B$ and $D_0\in B$, so that $\Cl_D(A)\subseteq B$ and we can use Lemma \ref{lemma:12}(b).

From \eqref{eq:setbelow} 
and \eqref{eq:circle}
every $\xi\in A'$ can then be written as a sum:
\begin{equation}\label{eq:genB}
\begin{split}
\xi &=
e_{11}\otimes e_{11}\otimes
\left[\!
\begin{array}{c|ccc}
a_1 & & v_1 & \\
\hline
\rule[-12pt]{0pt}{30pt}w_1 && b_1 &
\end{array}
\!\right]
+
e_{22}\otimes e_{11}\otimes
\left[\!
\begin{array}{c|ccc}
a_2 & & v_2 & \\
\hline
\rule[-12pt]{0pt}{30pt}w_2 && b_2 &
\end{array}
\!\right] \\
& +(e_{33}+e_{44})\otimes e_{11}\otimes
\left[\!
\begin{array}{c|ccc}
\zero & & v_3 & \\
\hline
\rule[-12pt]{0pt}{30pt}w_3 && \zero &
\end{array}
\!\right]
+e_{11}\otimes e_{12}\otimes x
+e_{11}\otimes e_{21}\otimes y
+\eta
\end{split}
\end{equation}
with $a_i\in\C$, $v_i\in(\C^3)^*$ a row vector, $w_i\in\C^3$ a column vector, $b_i\in M_3(\C)$, $x,y\in M_4(\C)$, and $\eta\in B^\circ$. With a careful inspection of \eqref{eq:ClDA} one can verify that $\xi\in B^\circ$ if and only if $\xi-\eta=0$. On the other hand $\xi\in\Cl_D(A)'$ if and only if $[D_0,\xi]=[D_0,\xi-\eta]=0$.

We now prove that conditions (i) and (ii) are satisfied if and only if the only solution to the equation $[D_0,\xi-\eta]=0$ is the trivial one, i.e.~the one given by $a_i=b_i=v_i=w_i=0\;\forall\;i$; by the discussion above, this means that $\Cl_D(A)'=B^\circ$ and, by Lemma \ref{lemma:12}(b), that the \morita\ is satisfied.

Writing down $[D_0,\xi-\eta]$ and using the linear independence of the $e_{ij}$'s, one deduces that
the commutator vanishes if and only if:
\begin{align*}
\alpha_{1j}a_1 &=0
&
\alpha_{2j}a_2 &=0
&
\beta_{1j}b_1 &=0
&
\beta_{2j}b_2 &=0
\\
\overline{\alpha}_{1j}e_{11}x &=0
&
\overline{\beta}_{1j}(1-e_{11})x &=0
&
\alpha_{1j}ye_{11} &=0
&
\beta_{1j}y(1-e_{11}) &=0
\end{align*}
and
\begin{align*}
\beta_{1j}v_1-\alpha_{1j}v_3 &=0
&
\beta_{2j}v_2-\alpha_{2j}v_3 &=0
&
\alpha_{1j}w_1-\beta_{1j}w_3 &=0
&
\alpha_{2j}w_2-\beta_{2j}w_3 &=0
\\
\overline{\alpha}_{1j} v_1-\overline{\beta}_{1j}v_3 &=0
&
\overline{\alpha}_{2j} v_2-\overline{\beta}_{2j}v_3 &=0
&
\overline{\beta}_{1j}w_1-\overline{\alpha}_{1j} w_3 &=0
&
\overline{\beta}_{2j}w_2-\overline{\alpha}_{2j} w_3 &=0
\end{align*}
for all $j=3,4$.

The first set of equations in $a_i,b_i,x,y$ admits only the zero solution if{}f $\alpha$ and $\beta$ have no zero rows, i.e.~condition (i) is satisfied (for example, $\alpha_{13}a_1=\alpha_{14}a_1=0$ admits only the solution $a_1=0$ if{}f $\alpha_{13}$ and $\alpha_{14}$ are not both zero).

Now, under the assumption that (i) holds, we can prove that the second set of equations, in $v_i,w_i$, admits only the zero solution if{}f (ii) is satisfied. Denoting by $v_i^j$ and $w_i^j$ the $j$-th component of $v_i$ and $w_i$, first we rewrite the above equations in matrix form as:
\begin{equation}\label{eq:15}
\left[\!\begin{array}{cc}
v_1^i & 0 \\
0 & v_2^i
\end{array}\!\right]\beta=\alpha
\left[\!\begin{array}{cc}
v_3^i & 0 \\
0 & v_3^i
\end{array}\!\right] \;,
\qquad
\left[\!\begin{array}{cc}
v_3^i & 0 \\
0 & v_3^i
\end{array}\!\right] \; \beta^*
=\alpha^*
\left[\!\begin{array}{cc}
v_1^i & 0 \\
0 & v_2^i
\end{array}\!\right],
\end{equation}
for all $i=1,2,3$, plus similar equations for $w_1,w_2,w_3$.
Here $\alpha,\beta$ are the matrices in \eqref{eq:matrices},
and matrix product and hermitian conjugate is understood.
Assume that there exists $\phi,\psi$ such that \eqref{eq:14}
holds (so condition (ii) is not satisfied); then \eqref{eq:15} admits non-zero solutions given for example by
$v_1=(e^{i\phi},0,0)$,
$v_2=(e^{i\psi},0,0)$ and
$v_3=(1,0,0)$.

Conversely, suppose \eqref{eq:15} has a non-zero solution. If $v_3=0$ then from \eqref{eq:15} we deduce that $\beta$ has at least one zero row (since by hypothesis $v_1$ and $v_2$ cannot both be zero), and is in contradiction with the assumption that condition (i) is satisfied. Thus $v_3^i\neq 0$ for at least one value of $i$ and, calling $c_1:=(v_3^i)^{-1}v_1^i$ and $c_2:=(v_3^i)^{-1}v_2^i$,
from the two equalities in \eqref{eq:15} we get
$$
\alpha=
\left[\!\begin{array}{cc}
c_1 & 0 \\
0 & c_2
\end{array}\!\right]\beta \;,
\qquad\quad
\beta=
\left[\!\begin{array}{cc}
\overline{c}_1 & 0 \\ 0 & \overline{c}_2
\end{array}\!\right]
\; \alpha.
$$
Combining the two, we see that $c_1 \overline{c}_1=1= c_2 \overline{c}_2$, so both $c_i$ must be unitary, thus proving that the identity \eqref{eq:14} holds for some $\phi,\psi\in\R$.
\end{proof}

\begin{rem}\label{rem:CC}
The Dirac operator of Chamseddine-Connes (see e.g.~\cite{CM08}, or also \cite[\S5.3]{DD14}) belongs to the class described in Theorem \ref{thm:15} and
is obtained by choosing the matrices \eqref{eq:matrices} as follows:
$$
\alpha^*=\left[\!\begin{array}{cc}
\Upsilon_\nu & 0 \\
0 & \Upsilon_e
\end{array}\!\right]
\qquad\qquad
\beta^*=\left[\!\begin{array}{cc}
\Upsilon_u & 0 \\
0 & \Upsilon_d
\end{array}\!\right]
$$
Theorem \ref{thm:15} tells us that the \morita\ holds if{}f
$\Upsilon_x\neq 0\;\forall\;x\in\{\nu,e,u,d\}$ and
$$
|\Upsilon_\nu|\neq|\Upsilon_u| \qquad\text{or}\qquad
|\Upsilon_e|\neq|\Upsilon_d| \;.
$$
Hence, the noncommutative description of the Standard Model leads to
an explicit restriction of the mass parameters of leptons and quarks. Although
this does not give any precise numerical predictions, its behavior with respect to the
renormalization and possible physical consequences should be further analyzed.
\end{rem}

\begin{thm}\label{thm:17}
Let $D_0$ be as in \S\ref{sec:4} with $\alpha_{13}=\alpha_{14}=0$ and $\delta_{ij}=0$ for all $i=1,2$ and $j=2,3,4$ (Prop.~\ref{prop:5}, case 2). Then the \morita\ is satisfied if and only if
$\delta_{21}\neq 0$, $(\alpha_{23},\alpha_{24})\neq (0,0)$, and the matrix $\beta$ in \eqref{eq:matrices} has no zero row.
\end{thm}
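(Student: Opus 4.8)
The plan is to follow the four-step strategy laid out on page~\pageref{strategy}, reusing the algebra $B=\C\oplus M_3(\C)\oplus M_4(\C)\oplus M_4(\C)$ from \eqref{eq:Bprime} exactly as in the proof of Theorem~\ref{thm:15}: under the hypotheses of case~2 one still has $A\subseteq B$ and $D_0\in B$, hence $\Cl_D(A)\subseteq B$, and we have already checked $B^\circ=B'$. So by Lemma~\ref{lemma:12}(b) the \morita\ is equivalent to $\Cl_D(A)'\subseteq B^\circ$, i.e.\ to the statement that every $\xi\in A'$ commuting with $D_0$ actually lies in $B^\circ$. Writing a general $\xi\in A'$ as in \eqref{eq:genB} with the ``extra'' part $\eta\in B^\circ$ split off, the claim becomes: the only solution of $[D_0,\xi-\eta]=0$ in the parameters $a_i,b_i,v_i,w_i,x,y$ is the trivial one, precisely when $\delta_{21}\neq0$, $(\alpha_{23},\alpha_{24})\neq(0,0)$, and $\beta$ has no zero row.

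The execution is to write out $[D_0,\xi-\eta]=0$ componentwise, using the new form of $D_0$ (now $\alpha_{13}=\alpha_{14}=0$, and of the $\delta$'s only $\delta_{21}$ survives, so $D_0$ has an extra off-diagonal block $\delta_{21}\,e_{21}\otimes e_{12}\otimes e_{11}$ together with its image under \eqref{eq:circle}). The commutator again decomposes by linear independence of the $e_{ij}\otimes e_{kl}\otimes(\cdot)$ pieces into several families of linear equations. I expect three groups: (a) equations forced by the surviving $\delta_{21}$-block, which should read something like $\delta_{21}(\text{stuff})=0$ and, since $\delta_{21}\neq0$, kill the corresponding components of $v_i,w_i$ (this is the mechanism that makes $\delta_{21}\neq0$ necessary); (b) equations $\alpha_{2j}a_2=0$, $\beta_{ij}b_i=0$, $\overline\alpha_{2j}e_{11}x=0$, etc., which force $a_i=b_i=x=y=0$ exactly when $(\alpha_{23},\alpha_{24})\neq(0,0)$ and $\beta$ has no zero row — note that here $a_1$ is no longer constrained by any $\alpha_{1j}$, so one must check it is instead pinned down by a $\delta_{21}$-equation; (c) the remaining coupled equations among $v_1,v_2,v_3,w_1,w_2,w_3$. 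One then verifies that each of the three stated conditions is individually necessary by exhibiting, when it fails, an explicit nonzero $\xi-\eta$ commuting with $D_0$ (e.g.\ if $\delta_{21}=0$ one recovers degrees of freedom analogous to the $c_i$-solutions of Theorem~\ref{thm:15}; if a row of $\beta$ vanishes the corresponding $b_i$ or $v_i,w_i$ is free; if $\alpha_{23}=\alpha_{24}=0$ then $a_2$ and some $v_2,w_2$ become free), and conversely that when all three hold every family in (a)--(c) has only the zero solution.

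The subtlety — and the step I expect to be the main obstacle — is group (c): the coupled system in $v_i,w_i$. In Theorem~\ref{thm:15} this system had the clean matrix form \eqref{eq:15} and the obstruction was the unitary-rescaling phenomenon \eqref{eq:14}; here, because $\alpha_{13}=\alpha_{14}=0$ the matrix $\alpha$ has a zero row, so the naive analogue of \eqref{eq:15} would already admit nonzero solutions — and the resolution must be that the extra $\delta_{21}$-equations from group (a) precisely eliminate those would-be solutions. Making this interplay transparent is the crux: one has to show that the combination of the $\beta$-equations (forcing $v_3,w_3$ and the second rows of $v_1,w_1,\dots$ to vanish) together with the $\delta_{21}$-equations (forcing the first components) leaves no room, so that condition (ii)-type phase ambiguities simply cannot arise in case~2. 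I would organize the argument so that one first uses (a) and (b) to zero out as many components as possible, and only then feeds what survives into (c), where a short direct check should finish it. Once the equivalence ``trivial solution $\iff$ the three conditions'' is established, Lemma~\ref{lemma:12}(b) delivers the theorem.
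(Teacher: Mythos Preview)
There is a concrete gap in your approach: the claim that ``under the hypotheses of case~2 one still has $D_0\in B$'' is false for the algebra $B=\C\oplus M_3(\C)\oplus M_4(\C)\oplus M_4(\C)$ of \eqref{eq:Bprime}. Looking at the representation \eqref{eq:ClDA}, every element of $B$ has only $e_{11}$ and $e_{22}$ in the middle tensor factor; there is no $e_{12}$ or $e_{21}$ component. But as you yourself note, in case~2 the operator $D_0$ carries the extra summand
\[
\delta_{21}\,e_{21}\otimes e_{12}\otimes e_{11}
\;+\;\overline{\delta}_{21}\,e_{12}\otimes e_{21}\otimes e_{11},
\]
which lies outside $B$ whenever $\delta_{21}\neq 0$. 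Since $D_0\in\Cl_D(A)$ by Remark~\ref{rem:8}, this means $\Cl_D(A)\not\subseteq B$, and the hypothesis of Lemma~\ref{lemma:12} is not available. Your steps (a)--(c) therefore do not get off the ground with this choice of $B$.

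The paper fixes exactly this issue by first conjugating with the permutation matrix $U$ of \eqref{eq:permmatrix}. By Lemma~\ref{lemma:14}, $U$ commutes with $A$ and with $J$, so passing to the unitarily equivalent triple $(A,H,UDU,J)$ leaves $A$, $J$, and the \morita\ unchanged while replacing $D_0$ by $UD_0U$. The effect of this conjugation is to move the $\delta_{21}$-block from the $e_{12}/e_{21}$ slot into the $e_{11}$ slot of the middle tensor factor, after which $UD_0U$ \emph{does} lie in $B$ and one can rerun the proof of Theorem~\ref{thm:15} almost verbatim. The resulting linear system is then straightforward (no coupled analogue of \eqref{eq:15} survives, so the ``subtlety'' you anticipated in group~(c) simply does not appear): one reads off directly that the only solution is trivial precisely when $\delta_{21}\neq 0$, $(\alpha_{23},\alpha_{24})\neq(0,0)$, and $\beta$ has no zero row. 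The missing idea in your plan is this preliminary unitary conjugation; without it you would need to replace $B$ by a larger algebra accommodating the off-diagonal $e_{12},e_{21}$ terms (as the paper does in cases~3 and~4 with \eqref{eq:Bsecond}), which is more work.
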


\begin{proof}
Using \eqref{eq:permmatrix} we can transform the spectral triple in one that is unitary equivalent, but notationally simpler (clearly two unitary equivalent spectral triples both satisfy the \morita, or the both don't). Conjugation by $U$ doesn't change the algebra nor $J$ (Lemma \ref{lemma:14}), but transforms $D_0$ into:
$$
UD_0U =
\left[\!\begin{array}{cc|cc}
\zero & \overline\delta_{21} & \zero & \zero \\
\delta_{21} & \zero & \alpha_{23} & \alpha_{24} \\
\hline
\zero & \overline\alpha_{23} & \zero & \zero \\
\zero & \overline\alpha_{24} & \zero & \zero
\end{array}\!\right]
\otimes e_{11}\otimes e_{11}+
\left[\!\begin{array}{cc|cc}
\zero & \zero & \beta_{13} & \beta_{14} \\
\zero & \zero & \beta_{23} & \beta_{24} \\
\hline
\overline\beta_{13} & \overline\beta_{23} & \zero & \zero \\
\overline\beta_{14} & \overline\beta_{24} & \zero & \zero
\end{array}\!\right]
\otimes e_{11}\otimes (1-e_{11})
$$
We can repeat almost verbatim the proof of Theorem \ref{thm:15}.
Both $A$ and $UD_0U$ are contained in the algebra $B$ in \eqref{eq:Bprime}. Let $\xi$ be as in \eqref{eq:genB}.
We must prove that the equation $[UD_0U,\xi-\eta]=0$ admits only the trivial solution
$a_i=b_i=v_i=w_i=0\;\forall\;i$ if and only if the conditions in Theorem \ref{thm:17} are satisfied.
Writing down the commutator one sees that it vanishes if{}f
\begin{align*}
\delta_{21}a_1 &=0
&
\alpha_{2j}a_2 &=0
&
\beta_{1j}b_1 &=0
&
\beta_{2j}b_2 &=0
\\
\delta_{21}e_{11}x &=0
&
\overline\beta_{1j}(1-e_{11})x &=0
&
\overline\delta_{21}ye_{11} &=0
&
\beta_{1j}y(1-e_{11}) &=0
\end{align*}
and
\begin{align*}
\delta_{21}v_1 &=0
&
\beta_{1j}v_1 &=0
&
\overline\delta_{21}w_1 &=0
&
\overline\beta_{1j}w_1 &=0
\\
\overline\delta_{21}v_2 &=0
&
\beta_{2j}v_2 &=\alpha_{2j}v_3
&
\delta_{21}w_2 &=0
&
\alpha_{2j}w_2 &=\beta_{2j}w_3
\\
\overline\beta_{1j}v_3 &=0
&
\overline\alpha_{2j}v_2 &=\overline\beta_{2j}v_3
&
\beta_{1j}w_3 &=0
&
\overline\beta_{2j}w_2 &=\overline\alpha_{2j}w_3
\end{align*}
for all $j=3,4$. The first set of equations has only the zero solution if{}f $\delta_{21}\neq 0$, $(\alpha_{23},\alpha_{24})\neq (0,0)$ and the first row of $\beta$ is not zero. In addition, the second set of equations admits only the zero solution if{}f the second row of $\beta$ is also not zero.
\end{proof}

\subsection{Cases 3 and 4.}

For $\lambda\in\C$, $m\in M_3(\C)$ and $i=1,2$ define
\begin{align*}
\pi_0(\lambda) &:=\lambda e_{11}\otimes (1\otimes 1-e_{11}\otimes e_{11}) \;,\\
\pi_i(m) &:=
\left[\!
\begin{array}{c|ccc}
0 & \;0\; & \;0\; & \;0\; \\
\hline
\begin{matrix} \;0\; \\ 0 \\ 0 \end{matrix} && m
\end{array}
\!\right]
\otimes e_{ii}\otimes (1-e_{11}) \;,
\end{align*}
and let $\pi_3$ be the representation of $M_7(\C)$ given by
\begin{align*}
\pi_3(a) &:=
\left[\!\begin{array}{cccc}
a_{11} & a_{12} & a_{13} & a_{14} \\
a_{21} & a_{22} & a_{23} & a_{24} \\
a_{31} & a_{32} & a_{33} & a_{34} \\
a_{41} & a_{42} & a_{43} & a_{44}
\end{array}\!\right]
\otimes e_{11}\otimes e_{11}+
\left[\!\begin{array}{c|ccc}
\zero & a_{15} & a_{16} & a_{17} \\
\zero & a_{25} & a_{26} & a_{27} \\
\zero & a_{35} & a_{36} & a_{37} \\
\zero & a_{45} & a_{46} & a_{47}
\end{array}\!\right]
\otimes e_{12}\otimes e_{11} \\
&\; +
\left[\!\begin{array}{cccc}
\zero & \zero & \zero & \zero \\
\hline
a_{51} & a_{52} & a_{53} & a_{54} \\
a_{61} & a_{62} & a_{63} & a_{64} \\
a_{71} & a_{72} & a_{73} & a_{74}
\end{array}\!\right]
\otimes e_{21}\otimes e_{11}+
\left[\!\begin{array}{c|ccc}
\zero & \zero & \zero & \zero \\
\hline
\zero & a_{55} & a_{56} & a_{57} \\
\zero & a_{65} & a_{66} & a_{67} \\
\zero & a_{75} & a_{76} & a_{77}
\end{array}\!\right]
\otimes e_{22}\otimes e_{11}
\end{align*}
for all $a=(a_{ij})\in M_7(\C)$.
The product of any two of these representations is zero, and their sum gives a faithful unital representation of the algebra
\begin{equation}\label{eq:Bsecond}
B:=\C\oplus M_3(\C)\oplus M_3(\C)\oplus M_7(\C) \;.
\end{equation}
Using \eqref{eq:circle}, one easily checks that $\pi_i$ commutes with $\pi_j^\circ$ for all $i,j$. Therefore, after identifying $B$ with its representation, we can conclude that $B^\circ\subseteq B'$. 

In the notations of \eqref{eq:Hi}, the representation of $B$ is equivalent to the one (by matrix multiplication on the first leg) on
$$
(\C\otimes\C^7)\oplus
(\C^3\oplus\C^3)\oplus
(\C^3\oplus\C^3)\oplus
(\C^7\oplus\C) \;.
$$
From the Lemma~\ref{lemma:11}, $B'\simeq M_7(\C)\oplus M_3(\C)\oplus M_3(\C)\oplus\C\simeq B$, and so $B^\circ=B'$.

\begin{thm}\label{thm:18}
Let $D_0$ be as in \S\ref{sec:4} with $\delta_{21} = 0$ and $\beta_{13}=\beta_{14}=0$ (Prop.~\ref{prop:5}, case 3). Then the \morita\ is satisfied if and only if
$(\beta_{23},\beta_{24})\neq (0,0)$ and of the four vectors
\begin{equation}\label{eq:vectors}
(\alpha_{13},\alpha_{14})
\;,\qquad
(\alpha_{23},\alpha_{24})
\;,\qquad
(\delta_{12},\delta_{13},\delta_{14})
\;,\qquad
(\delta_{22},\delta_{23},\delta_{24}) \;,
\end{equation}
at least three are not zero.
\end{thm}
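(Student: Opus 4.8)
The plan is to follow the four-step strategy outlined on page \pageref{strategy}, adapting verbatim the argument of Theorems \ref{thm:15} and \ref{thm:17}, but now with the ``big'' algebra $B$ of \eqref{eq:Bsecond}. Since $A\subseteq B$ and $D_0\in B$ (the hypotheses $\delta_{21}=0$ and $\beta_{13}=\beta_{14}=0$ are exactly what is needed for $D_0$ to land in the image of $\pi_3$ on the $e_{11}\otimes e_{11}$-leg plus $\pi_i$ on the $(1-e_{11})$-leg), we have $\Cl_D(A)\subseteq B$, and since $B^\circ=B'$ has already been established we may invoke Lemma \ref{lemma:12}(b): the \morita\ holds if and only if $\Cl_D(A)'\subseteq B^\circ$. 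As in the earlier proofs, I would write a general element $\xi\in A'$ using \eqref{eq:setbelow} and \eqref{eq:circle}, decompose it as $\xi=(\xi-\eta)+\eta$ with $\eta\in B^\circ$, and observe that $\xi\in B^\circ$ iff $\xi-\eta=0$ while $\xi\in\Cl_D(A)'$ iff $[D_0,\xi-\eta]=0$. So everything reduces to showing that the stated conditions on $\beta_{2j}$ and the four vectors in \eqref{eq:vectors} are equivalent to the equation $[D_0,\xi-\eta]=0$ having only the trivial solution.

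First I would compute $[D_0,\xi-\eta]$ explicitly and, using linear independence of the $e_{ij}$'s on the three tensor legs, extract a linear system in the free parameters $a_i,b_i,v_i,w_i$ of $\xi-\eta$. I expect this system to split, just as in Theorems \ref{thm:15} and \ref{thm:17}, into a ``scalar/matrix block'' (equations of the form $\alpha_{ij}a_i=0$, $\delta_{1j}\cdot(\text{something})=0$, etc.\ coming from the $a_i$, $b_i$, $x$, $y$ entries) and a ``vector block'' (equations among the row vectors $v_i$ and column vectors $w_i$ coming from the off-diagonal parts of $D_0$ on the $(\C^3)$-legs). The scalar block forces $a_i=b_i=0$ (and the $x,y$ parts vanish) precisely when none of $(\alpha_{13},\alpha_{14})$, $(\alpha_{23},\alpha_{24})$, $(\delta_{12},\delta_{13},\delta_{14})$, $(\delta_{22},\delta_{23},\delta_{24})$ that couple to a given parameter is zero; the combinatorics here should be what produces the ``at least three of the four vectors are nonzero'' condition, since each parameter is annihilated by one vector and survives iff that vector is nonzero, and one of the four may be allowed to vanish without reintroducing a nontrivial solution. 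The vector block, by contrast, ties $v_3,w_3$ to $v_1,v_2,w_1,w_2$ through $\beta_{2j}$ (the surviving entries of $\beta$ after $\beta_{13}=\beta_{14}=0$), and the requirement $(\beta_{23},\beta_{24})\neq(0,0)$ is exactly what kills this block.

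The main obstacle I anticipate is the bookkeeping in the scalar block: unlike in case 1, where the symmetric roles of $\alpha$ and $\beta$ gave the clean ``no zero rows'' statement, here the roles of $\alpha$, $\beta$ and $\delta$ are asymmetric (two $\delta$-vectors live in $\C^3$, two $\alpha$-vectors in $\C^2$, and $\beta$ has been truncated), so I must carefully track which parameter among $a_1,a_2,b_1,b_2$ and the entries of $x,y$ is coupled to which of the four vectors in \eqref{eq:vectors}, and verify that the simultaneous vanishing of all these parameters is equivalent to at least three of the four vectors being nonzero — in particular that allowing exactly one of them to vanish does \emph{not} create a nonzero solution, because the parameter it would otherwise free is also pinned down by one of the other (nonzero) vectors. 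Once that equivalence is checked and the vector-block analysis (which runs parallel to the $v_i,w_i$ computation in Theorem \ref{thm:15}, now with $\alpha_{2j},\beta_{2j}$ in place of the full matrices) confirms that $(\beta_{23},\beta_{24})\neq(0,0)$ is necessary and sufficient for the trivial solution there, the proof concludes by Lemma \ref{lemma:12}(b) exactly as before.
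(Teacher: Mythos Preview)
Your proposal is correct and follows exactly the paper's own approach: use the algebra $B$ of \eqref{eq:Bsecond}, invoke Lemma~\ref{lemma:12}(b), write $\xi\in A'$ modulo $B^\circ$, and reduce to a linear system from $[D_0,\xi-\eta]=0$; the paper itself omits the explicit computation as ``tedious.'' One small correction: because $B$ here is \eqref{eq:Bsecond} rather than \eqref{eq:Bprime}, the complement $\xi-\eta$ is parametrized differently than in \eqref{eq:genB} (see the paper's display \eqref{eq:21}) --- in particular there are no full $4\times 4$ blocks $x,y$ on the $e_{12},e_{21}$ legs, since part of that content now lies in $B^\circ$ via $\pi_3$ --- so your bookkeeping of the ``scalar/matrix block'' should be set up against the correct parametrization before matching it to the four vectors in \eqref{eq:vectors}.
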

\begin{proof}
The strategy of the proof is the same as for Theorems \ref{thm:15} and \ref{thm:17}.
Let $B$ be the algebra \eqref{eq:Bsecond} (identified with its representation on $H$).
Note that $A\subseteq B$ and $D_0\in B$, so that $\Cl_D(A)\subseteq B$. From \eqref{eq:setbelow}
and \eqref{eq:circle}
every $\xi\in A'$ can then be written as a sum:
\begin{align}
\xi &=
e_{11}\otimes e_{11}\otimes
\left[\!
\begin{array}{c|ccc}
\zero & & v_1 & \\
\hline
\rule[-12pt]{0pt}{30pt}w_1 && \zero &
\end{array}
\!\right]
+e_{11}\otimes e_{12}\otimes \left[\!
\begin{array}{c|ccc}
a_2 & & v_2 & \\
\hline
\rule[-12pt]{0pt}{30pt}\zero && \zero &
\end{array}
\!\right]
\notag \\
&+e_{11}\otimes e_{21}\otimes
\left[\!
\begin{array}{c|ccc}
a_3 & & \zero & \\
\hline
\rule[-12pt]{0pt}{30pt}w_3 && \zero &
\end{array}
\!\right]
+e_{22}\otimes e_{11}\otimes
\left[\!
\begin{array}{c|ccc}
a_4 & & v_4 & \\
\hline
\rule[-12pt]{0pt}{30pt}w_4 && b_4 &
\end{array}
\!\right] \label{eq:21} \\
&+(e_{33}+e_{44})\otimes e_{11}\otimes
\left[\!
\begin{array}{c|ccc}
a_5 & & v_5 & \\
\hline
\rule[-12pt]{0pt}{30pt}w_5 && \zero &
\end{array}
\!\right]
+(1-e_{11})\otimes e_{22}\otimes\left[\!
\begin{array}{c|ccc}
a_6 & & v_6 & \\
\hline
\rule[-12pt]{0pt}{30pt}w_6 && \zero &
\end{array}
\!\right]
+\eta \notag
\end{align}
with $a_i\in\C$, $v_i\in(\C^3)^*$ a row vector, $w_i\in\C^3$ a column vector, $b_i\in M_3(\C)$ and $\eta\in B^\circ$.
The element $\xi$ belongs to $B^\circ$ if{}f $\xi-\eta=0$.

As in previous cases, one transforms $[D_0,\xi-\eta]=0$ into a system of linear equations and checks that this admits only the zero solution
$a_i=b_i=v_i=w_i=0\;\forall\;i$ if and only if the conditions in Theorem \ref{thm:18} are satisfied (this is a tedious computation similar to those for \ref{thm:15} and \ref{thm:17}, that we omit).
\end{proof}

\begin{thm}\label{thm:19}
Let $D_0$ as in \S\ref{sec:4} with $\delta_{12}=\delta_{13}=\delta_{14}=\beta_{13}=\beta_{14}=\alpha_{13}=\alpha_{14}=0$ (Prop.~\ref{prop:5}, case 4). Then the \morita\ is satisfied if and only if
$\delta_{21}\neq 0$ and none of the following three vectors
$$
(\alpha_{23} , \alpha_{24})
\;,\qquad
(\beta_{23}, \beta_{24})
\;,\qquad
(\delta_{22} , \delta_{23} , \delta_{24}) \;,
$$
is zero.
\end{thm}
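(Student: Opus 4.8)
The plan is to follow the same four-step strategy used in Theorems \ref{thm:15}, \ref{thm:17} and \ref{thm:18}, reusing the ``big'' algebra $B$ of \eqref{eq:Bsecond}. Indeed, case 4 of Prop.~\ref{prop:5} is a special case of case 3 (the extra vanishing $\delta_{12}=\delta_{13}=\delta_{14}=0$ is imposed on top of $\delta_{21}=0$ and $\beta_{13}=\beta_{14}=0$), so the Dirac operator $D_0$ still lies in $B$, we still have $A\subseteq B$, $\Cl_D(A)\subseteq B$, and the facts $B^\circ\subseteq B'$ and $B^\circ=B'$ established there remain valid verbatim. Thus by Lemma \ref{lemma:12}(b) the \morita\ holds if and only if $\Cl_D(A)'\subseteq B^\circ$, which as in the previous proofs amounts to showing that the only $\xi\in A'$, written in the form \eqref{eq:21}, with $[D_0,\xi-\eta]=0$ is the one with all $a_i,b_i,v_i,w_i$ zero.

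Concretely, I would take the parametrization \eqref{eq:21} of a general element of $A'$ and simply substitute the case-4 constraints $\delta_{12}=\delta_{13}=\delta_{14}=\alpha_{13}=\alpha_{14}=\beta_{13}=\beta_{14}=0$ into the linear system obtained in the proof of Theorem \ref{thm:18} (the system is not displayed there but is the obvious analogue of the ones in Theorems \ref{thm:15} and \ref{thm:17}). The surviving $\delta$-entries are $\delta_{21}$ and $(\delta_{22},\delta_{23},\delta_{24})$; the surviving $\alpha$-entries are $(\alpha_{23},\alpha_{24})$; the surviving $\beta$-entries are $(\beta_{23},\beta_{24})$. Writing out $[D_0,\xi-\eta]$ and using linear independence of the $e_{ij}$'s, one gets, for $j=3,4$, equations of the shape $\delta_{21}a_\bullet=0$, $\alpha_{2j}a_\bullet=0$, $\beta_{2j}a_\bullet=0$ together with coupling equations relating the row/column vectors $v_i,w_i$ through $\alpha_{2j},\beta_{2j},\delta_{2k}$, entirely parallel to the displays in the proof of Theorem \ref{thm:17}. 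Reading off when this homogeneous system has only the trivial solution gives precisely: $\delta_{21}\neq0$ (to kill the $a_i$ and the $e_{11}$-localized parts of $x,y$, $v_1$, $w_1$), $(\alpha_{23},\alpha_{24})\neq(0,0)$, $(\beta_{23},\beta_{24})\neq(0,0)$ and $(\delta_{22},\delta_{23},\delta_{24})\neq(0,0)$ (each needed to force vanishing of a distinct block among the remaining $v_i,w_i,b_i$). This matches the stated conditions.

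The only genuine content beyond bookkeeping is checking that the three vanishing conditions on $(\alpha_{23},\alpha_{24})$, $(\beta_{23},\beta_{24})$, $(\delta_{22},\delta_{23},\delta_{24})$ are \emph{independently} necessary — i.e.\ that dropping any one of them produces a genuine nonzero solution $\xi\in A'\cap\Cl_D(A)'\setminus B^\circ$ — and, conversely, that together with $\delta_{21}\neq0$ they are sufficient; this is exactly the ``no zero row / no zero vector'' type of argument that appeared in Theorems \ref{thm:15}--\ref{thm:18}, here made easier by the fact that case 4 has the most constraints and hence the fewest free parameters in $D_0$. I expect the main obstacle to be purely organizational: correctly tracking which block of \eqref{eq:21} each surviving matrix entry of $D_0$ constrains, so as not to miss or double-count a vanishing condition; there is no conceptual difficulty, and in fact the simplest course is to observe that case 4 is obtained from case 3 by further specialization and to point out that the system of Theorem \ref{thm:18} then degenerates to one whose solvability is governed by exactly the four scalars/vectors in the statement. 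Accordingly I would present the proof as a short reduction: ``We argue as in Theorem \ref{thm:18}, now with the additional vanishing of $\delta_{12},\delta_{13},\delta_{14}$; the linear system $[D_0,\xi-\eta]=0$ becomes \dots\ and admits only the zero solution iff $\delta_{21}\neq0$ and none of $(\alpha_{23},\alpha_{24})$, $(\beta_{23},\beta_{24})$, $(\delta_{22},\delta_{23},\delta_{24})$ vanishes.''
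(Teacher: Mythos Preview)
Your reduction contains a genuine error. You assert that case~4 of Prop.~\ref{prop:5} is a special case of case~3, obtained by imposing $\delta_{12}=\delta_{13}=\delta_{14}=0$ \emph{on top of} $\delta_{21}=0$ and $\beta_{13}=\beta_{14}=0$. But case~4 does \emph{not} require $\delta_{21}=0$: its defining constraints are $\delta_{12}=\delta_{13}=\delta_{14}=\beta_{13}=\beta_{14}=\alpha_{13}=\alpha_{14}=0$, with $\delta_{21}$ free. Indeed, the very statement of the theorem asks that $\delta_{21}\neq 0$ for the \morita\ to hold, which would be vacuous if $\delta_{21}=0$ were built into case~4.

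This matters because it breaks your containment $D_0\in B$. With $\delta_{21}\neq 0$, the $\otimes\,e_{12}\otimes e_{11}$ block of $D_0$ has the entry $\delta_{21}$ in position $(2,1)$, i.e.\ in its first column; but in the representation $\pi_3$ of $M_7(\C)$ defining $B$ (see the display before \eqref{eq:Bsecond}) the first column of that block is identically zero. Likewise the $\otimes\,e_{21}\otimes e_{11}$ block of $D_0$ carries $\overline{\delta}_{21}$ in its first row, which $\pi_3$ also forces to vanish. Hence $D_0\notin B$ and Lemma~\ref{lemma:12} cannot be invoked as you propose; the whole ``specialize the system from Theorem~\ref{thm:18}'' plan collapses.

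The paper's proof repairs this by the same device used in Theorem~\ref{thm:17}: conjugate the spectral triple by the permutation unitary $U$ of \eqref{eq:permmatrix}, which fixes $A$ and $J$ (Lemma~\ref{lemma:14}) and moves the offending $\delta_{21}$ entry out of the $\otimes\,e_{12}\otimes e_{11}$ block and into the $\otimes\,e_{11}\otimes e_{11}$ block, where $\pi_3$ allows a full $4\times 4$ matrix. After this conjugation one does have $UD_0U\in B$, and the linear-system analysis proceeds. So the correct parallel is not ``case~4 is case~3 with extra zeros'' but rather ``case~4 relates to case~3 exactly as case~2 related to case~1: same ambient $B$, but one must first conjugate by $U$.''
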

\begin{proof}
Let $B$ be the algebra \eqref{eq:Bsecond}. Using \eqref{eq:permmatrix} we transform the spectral triple in one that is unitary equivalent. Conjugation by $U$ doesn't change the algebra nor $J$ (Lemma \ref{lemma:14}), but transforms $D_0$ into:
\begin{align*}
\hspace*{-5mm}UD_0U &=
\left[\!\begin{array}{cc|cc}
\zero & \overline\delta_{21} & \zero & \zero \\
\delta_{21} & \zero & \alpha_{23} & \alpha_{24} \\
\hline
\zero & \overline\alpha_{23} & \zero & \zero \\
\zero & \overline\alpha_{24} & \zero & \zero
\end{array}\!\right]
\otimes e_{11}\otimes e_{11}+
\left[\!\begin{array}{cc|cc}
\zero & \zero & \zero & \zero \\
\zero & \zero & \beta_{23} & \beta_{24} \\
\hline
\zero & \overline\beta_{23} & \zero & \zero \\
\zero & \overline\beta_{24} & \zero & \zero
\end{array}\!\right]
\otimes e_{11}\otimes (1-e_{11}) \\
&\; +
\left[\!\begin{array}{cc|cc}
\zero & \zero & \zero & \zero \\
\zero & \delta_{22} & \delta_{23} & \delta_{24} \\
\hline
\zero & \zero & \zero & \zero \\
\zero & \zero & \zero & \zero
\end{array}\!\right]
\otimes e_{12}\otimes e_{11}+
\left[\!\begin{array}{cc|cc}
\zero & \zero & \zero & \zero \\
\zero & \overline\delta_{22} & \zero & \zero \\
\hline
\zero & \overline\delta_{23} & \zero & \zero \\
\zero & \overline\delta_{24} & \zero & \zero
\end{array}\!\right]
\otimes e_{21}\otimes e_{11}
\end{align*}
Both $A$ and $UD_0U$ are contained in the algebra $B$, and we can repeat once again the proof of previous three theorems.

For $\xi$ be as in \eqref{eq:21}, one checks that the equation $[D_0,\xi-\eta]=0$ admits only the zero solution
$a_i=b_i=v_i=w_i=0\;\forall\;i$ if and only if the conditions in Theorem \ref{thm:19} are satisfied.
\end{proof}

\begin{rem}
In each of the four cases in Prop.~\ref{prop:5}: Dirac operators satisfying the \secondorder\ are parametrized by a finite-dimensional complex vector space; Dirac operators not satisfying the \morita\ form a submanifold of codimension $\geq 1$.
In this sense, we can say that 
the ``generic'' Dirac operator satisfying the \secondorder\ satisfies the \morita\ as well (those not satisfying it being ``exceptions'').
\end{rem}

\section{On generations}\label{sec:6}
Let us conclude with a few comments on what happens with more than one generation of particles.
In Prop.~\ref{prop:5} we reproduced the results on the \secondorder\ in the penultimate section of \cite{FB14a}, under the assumption of a single generation of fermions. Though in \cite{FB14a} it is remarked that the extension to the three generations is straightforward, this requires a further scrutiny.

In the case of $n$ generations (e.g.~$n=3$), the Hilbert space in \S\ref{sec:4} is tensored by $\C^n$, $J$ acts on this additional factor by component-wise complex conjugation, and the entries $\alpha_{ij},\beta_{ij},\delta_{ij}$ of $D_0$ become $n\times n$ complex matrices acting on $\C^n$. One can repeat the proof of Prop.~\ref{prop:5} and find that the \secondorder\ is satisfied if{}f the matrices
\begin{equation}\label{eq:prod}
\delta_{21}\overline\alpha_{1i} \;,\qquad
\delta_{21}\overline\delta_{1j} \;,\qquad
\delta_{ij}\overline\beta_{13} \;,
\end{equation}
vanish for all $i=1,2$ and $j=2,3,4$.

The difference is that then (if $n>1$) we cannot use the cancellation property to conclude that in the products \eqref{eq:prod} at least one of the two factors must be zero. Thus, we don't get the four cases in Prop.~\ref{prop:5} anymore: those conditions become sufficient but no longer necessary.
In the special case of Chamseddine-Connes Dirac operator, described in Remark \ref{rem:CC} (with $\Upsilon_x$ matrices acting on the additional $\C^n$ factor), the \secondorder\ is however satisfied. Whether the \morita\ is satisfied (for almost all $\Upsilon_x$'s) is under investigation.

\section*{Acknowledgement}
L.D.\ is grateful for his support at IMPAN provided by Simons-Foundation grant 346300 and a Polish Government MNiSW 2015-2019 matching fund. A.S.\ acknowledges the support from the grant NCN 2015/19/B/ST1/03098. This work is part of the project Quantum Dynamics sponsored by the EU-grant RISE 691246 and Polish Government grant 317281.

\end{document}